\documentclass[journal]{IEEEtran} 
\usepackage{amsmath,amsfonts}
\usepackage{algorithmic}
\usepackage{algorithm}
\usepackage{array}
\usepackage{textcomp}
\usepackage{stfloats}
\usepackage{url}
\usepackage{verbatim}
\usepackage{graphicx,subfigure}
\usepackage{amssymb}
\usepackage{cite}
\usepackage{amsthm} 
\usepackage{color}
\usepackage{diagbox,multirow,makecell}
\usepackage{hyperref}
\usepackage{booktabs}
\usepackage{multirow}

\hypersetup{
    colorlinks=true,
    linkcolor=blue,
    filecolor=magenta,      
    urlcolor=cyan,
    citecolor=blue,
}

\usepackage{epstopdf}
\newtheorem{theorem}{Theorem}

\newtheorem{lemma}{Lemma}

\newtheorem{corollary}{Corollary}

\newtheorem{proposition}{Proposition}

\definecolor{mygreen}{rgb}{0, 0.63, 0.32}
\definecolor{mybrown}{rgb}{0.7, 0.3, 0.0}

\begin{document}

\title{Mutual Coupling-Aware Beamforming in Multi-User Continuous Aperture Array Systems}

\author{Junjie~Ye, \IEEEmembership{Graduate Student Member,~IEEE,} Zhaolin~Wang, \IEEEmembership{Member,~IEEE,} Yuanwei~Liu, \IEEEmembership{Fellow,~IEEE,} 
Peichang~Zhang, \IEEEmembership{Member,~IEEE,} Lei~Huang, \IEEEmembership{Senior Member,~IEEE,}
Arumugam~Nallanathan, \IEEEmembership{Fellow,~IEEE} 
\thanks{This work has been submitted to the IEEE for possible publication. Copyright may be transferred without notice, after which this version may no longer be accessible.}
\thanks{J. Ye,  P. Zhang and L. Huang are with State Key Laboratory of Radio Frequency Heterogeneous Integration, Shenzhen University, Shenzhen, China. (e-mail: 2152432003@email.szu.edu.cn; $\lbrace$pzhang, lhuang$\rbrace$@szu.edu.cn.) \\ 
\indent Z. Wang and Y. Liu are with the Department of Electrical and Computer Engineering, the University of Hong Kong, Hong Kong, China. (e-mail: $\{$zhaolin.wang,yuanwei$\}$@hku.hk).\\
\indent A. Nallanathan is with the School of Electronic Engineering and Computer Science, Queen Mary University of London, London, UK. (e-mail: a.nallanathan@qmul.ac.uk).}}


\maketitle

\begin{abstract}
A mutual coupling–aware beamforming design for continuous aperture array (CAPA)-aided multi-user systems is investigated. First, a transmit coupling kernel is characterized to explicitly capture the mutual coupling effects inherent in CAPAs, based on which a mutual coupling–aware sum-rate maximization functional optimization problem is formulated. To address this problem, a kernel approximation (KA)-based weighted minimum mean-squared error (WMMSE) algorithm is developed. The optimal beamforming condition is derived within the WMMSE framework using the calculus of variations, while KA is employed to obtain a closed-form beamforming solution via wavenumber-domain Fourier transforms and Gauss–Legendre quadrature. Furthermore, the proposed framework is extended to CAPA-to-CAPA multiple-input multiple-output (MIMO) systems. Finally, numerical results demonstrate that: 1) the proposed algorithm achieves improved performance compared to benchmark schemes; 2) the modeled coupling effects are physically rational, where the performance of spatially discrete arrays converges to that of CAPAs; and 3) CAPA-to-CAPA MIMO systems can achieve higher degrees of freedom when the transceivers are placed in close proximity.
\end{abstract}

\begin{IEEEkeywords}
Continuous aperture array, kernel approximation, multi-user beamforming, mutual coupling,  weighted minimum mean-squared error. 
\end{IEEEkeywords}

\section{Introduction}
\IEEEPARstart{I}{n} the past two decades, multiple-input multiple-output (MIMO) technology has been recognized as a cornerstone of modern wireless communications. By deploying antenna arrays with half-wavelength antenna spacing, MIMO systems exploit spatial degrees of freedom (DoFs) to enhance spectral efficiency and link reliability \cite{MIMO}. Nevertheless, the explosive growth of user connectivity and emerging data-intensive applications has imposed unprecedented demands on throughput and service quality, rendering conventional MIMO architectures increasingly inadequate under stringent performance requirements.

To further exploit spatial resources, massive MIMO (mMIMO) has been proposed as an evolutionary extension, where hundreds or even thousands of antennas are employed to fully leverage spatial multiplexing gains \cite{mMIMO_survey}. Extensive research efforts have been devoted to characterizing and improving the performance of mMIMO systems. For instance, the work \cite{mMIMO_beamforming} developed a low-complexity beamforming framework to mitigate multi-user interference by solving a series of beam-nulling problems. The work in \cite{wideband_MIMO} extended this framework to wideband scenarios through joint user grouping, subcarrier allocation, and beamforming optimization. Learning-based beamforming strategies were further investigated in \cite{Zhilong_XL_MIMO,mMIMO_DeepUnfolding} to reduce computational complexity without sacrificing the performance. Beyond communications, mMIMO has also demonstrated considerable potential in radar sensing \cite{radar_mMIMO}, integrated sensing and communication (ISAC) systems \cite{Qihao_ISAC_mMIMO}, and satellite networks \cite{Ziyu_mMIMO_LEO_downlink}. Despite these advances, performance improvements in mMIMO largely rely on signal-domain processing, without fundamentally reshaping the underlying electromagnetic (EM) propagation environment.

Recent advances in meta-materials have enabled the development of programmable EM surfaces, which are capable of directly manipulating wave propagation. A representative architecture is the reconfigurable intelligent surface (RIS), which comprises numerous passive reflecting elements that dynamically regulate the EM waves \cite{RIS_survey}. Since RIS can introduces additional controllable spatial DoFs, RIS-assisted systems have attracted extensive research interest. Early studies investigated joint transmit and passive beamforming design, demonstrating substantial communication rate improvements \cite{RIS_WSR,RIS_WSR2}. RIS has also been leveraged for physical-layer security enhancement \cite{RIS_PLS,secure_RIS}, interference mitigation \cite{RIS_Inf,RIS_Inf2}, and ISAC system design \cite{JJ_Learning,jj_fris}. In addition to RIS, stacked intelligent metasurfaces (SIMs) have emerged as another class of programmable EM surfaces, employing multi-layer programmable structures to further enhance wave-domain controllability \cite{Survey_SIM}. The multi-layer architecture enables richer spatial manipulation capabilities, leading to improved communication performance with reduced reliance on digital beamforming \cite{An_SIM,Com_SIM}. Extensions to near-field wideband communications \cite{Qingchao_SIM}, ISAC systems \cite{ISAC_SIM}, and low-altitude economy networks \cite{SIM_LAE} have also been explored.

Motivated by the ability of programmable EM surfaces to operate directly in the wave domain, recent research has revisited array architectures beyond the conventional discrete-antenna paradigm. Continuous aperture arrays (CAPAs) have emerged as a promising evolution of mMIMO, where antenna spacing shrinks to zero to achieve continuous amplitude–phase control over the entire aperture \cite{Yuanwei_CAPA}. Theoretical performance analysis have characterized the effective DoFs \cite{CAPA_performance2}, signal-to-interference-and-noise ratio (SINR) behavior \cite{CAPA_performance}, and channel capacity between two CAPAs \cite{CAPA_capacity}. Under line-of-sight conditions, wavenumber-division multiplexing has been proposed to approach capacity limits \cite{Holo_MIMO}. Beyond performance analysis, beamforming design for CAPA systems has also attracted increasing attention. In \cite{Dai_CAP}, a Fourier-based approach was developed by discretizing the underlying functional optimization problem. In contrast, \cite{Zhaolin_CAPA_BF,Zhaolin_CAPA_MIMO} directly addressed the continuous functional formulation using functional derivatives. Owing to its enhanced spatial DoFs, CAPA has further demonstrated advantages in sensing \cite{Hao_Sensing_CAPA} and ISAC systems \cite{JJ_CAPA_ISAC}.

However, most existing works rely on idealized assumptions by neglecting mutual coupling across the continuous aperture. While such assumptions simplify analysis and algorithm design, they may lead to an overestimation of the achievable performance. A few recent studies have begun to account for mutual coupling effects. For example, \cite{Chongwen_CAPA_MutualCoupling} analyzed capacity degradation in terms of DoFs and radiation efficiency, while \cite{Pizzo_CAPA_MutualCoupling} employed wavenumber-domain analysis to characterize the joint impact of spatial correlation and mutual coupling. The work in \cite{Zhaolin_CAPA_MutualCoupling} considered a single-user CAPA system with coupling effects, where beamforming schemes were proposed to maximize directional gain. Nevertheless, existing mutual coupling–aware CAPA beamforming designs have primarily focused on relatively simple scenarios, leaving the beamforming strategies for multi-user interference environments largely unexplored.

Against the above background, this paper investigates an effective mutual coupling–aware beamforming framework for CAPA-aided multi-user communication systems. The main contributions are summarized as follows:
\begin{itemize}
    \item  A mutual coupling–aware signal model for CAPA-aided multi-user systems is established. Specifically, a uni-polarized mutual coupling kernel and the associated EM power of CAPA are derived to explicitly characterize the mutual coupling effects over the continuous aperture. Based on this model, a mutual coupling–aware sum-rate maximization problem is formulated via transmit source current pattern design.
    
    \item A kernel approximation (KA)-based weighted minimum mean-squared error (WMMSE) algorithm is proposed to solve the resultant functional optimization problem. The original sum-rate maximization problem is first transformed into an equivalent WMMSE formulation. By leveraging the calculus of variations (CoV), the necessary optimality condition for the continuous beamformer is derived. Then, the KA strategy is introduced to obtain a closed-form beamforming solution. Furthermore, a practical matrix-based implementation is developed to enable efficient numerical realization.
    
    \item  The proposed mutual coupling model and beamforming algorithm are extended to the CAPA-to-CAPA MIMO scenario. In particular, a mutual coupling–aware achievable rate maximization problem is formulated for CAPA-to-CAPA MIMO systems. The symbol estimator is adapted to the continuous aperture case, based on which the corresponding mean-squared error (MSE) matrix is derived. With this formulation, the weighted MSE minimization problem and the associated algorithmic adaption are presented.

    \item Comprehensive simulation results are provided for performance evaluation. Key findings include: 1) the proposed KA-based WMMSE algorithm effectively improves sum-rate, especially for large apertures and high carrier frequencies; 2) when mutual coupling is properly accounted for, the sum-rate performance of spatially discrete arrays (SPDA) converges to the CAPA limit as antenna spacing decreases; and 3) in CAPA-to-CAPA MIMO systems, higher spatial DoFs can be exploited as the distance between the CAPA transmitter and receiver decreases.
\end{itemize}

The remainder of this paper is organized as follows. Section~\ref{sec:model} presents the system model of the CAPA-aided multi-user communication system with mutual coupling effects and formulates a sum-rate maximization problem. Section~\ref{sec:algorithm} derives the optimal beamforming solution based on the KA-based WMMSE framework. Section~\ref{sec:extension} extends the proposed algorithm to the CAPA-to-CAPA MIMO scenario. Section~\ref{simulation} provides numerical results to validate the effectiveness of the proposed approach, and Section~\ref{conclusions} concludes the paper.

\textit{Notation}: Regular, bold lowercase, and uppercase letters denote scalars, vectors, and matrices, respectively. The symbols $\mathbb{C}^{M\times N}$ and $\mathbb{R}^{M\times N}$ represent complex and real space with a dimension of $M\times N$. Moreover, $(\cdot)^{-1}$, $(\cdot)^\mathrm{H}$, $(\cdot)^\mathrm{T}$, and $(\cdot)^*$ denote the inverse, conjugate transpose, transpose, and conjugate operations, respectively. $\int_{\mathcal{S}} f(\mathbf{s}) d\mathbf{s}$ is the integral of function $f(\mathbf{s})$ over the field $\mathcal{S}$. The ceiling operator is denoted by $\lceil \cdot \rceil$, while $|\cdot|$ and $|\cdot|$ represent the absolute value and norm, respectively. Besides, $\nabla$ stands for the nabla operator. Finally, $\Re\{x\}$ and $\Im\{x\}$ denote the real and imaginary part of $x$, while $\jmath$ is the imaginary unit.

\section{System Model and Problem Formulation} \label{sec:model}
\subsection{System Model}
A CAPA-aided downlink multi-user communication system that accounts for mutual coupling effect is considered. In this system, a CAPA $\mathcal{S}_{\mathrm{T}}$ is deployed at the transmitter to serve $K$ users, where each user is equipped with a single uni-polarized antenna. The CAPA is placed on the $x$–$y$ plane and has an area of $|\mathcal{S}_{\mathrm{T}}| = L_x \cdot L_y = A_{\mathrm{T}}$, where $L_x$ and $L_y$ denote the physical lengths of the aperture along the $x$- and $y$-axes, respectively.

\subsubsection{Transmit Signal}
We denote the coordinate of an arbitrary point on the CAPA by $\mathbf{s} = \left[s_x, s_y, 0\right]^{\mathrm{T}} \in \mathcal{S}_{\mathrm{T}}$, and assume that the CAPA is uni-polarized along the $y$-axis. Under this configuration, the source current density over the CAPA aperture can be expressed as
\begin{align} \label{transmit_signal}
    \mathbf{j}_\mathrm{t}(\mathbf{s}) = j_\mathrm{t}(\mathbf{s})\mathbf{u}_\mathrm{T} \in 
\mathbb{C}^{3 \times 1},
\end{align}
where $j_{\mathrm{t}}(\mathbf{s}) \in \mathbb{C}$ denotes the uni-polarized component, and $\mathbf{u}_{\mathrm{T}} = [0,1,0]^{\mathrm{T}}$ represents the polarization direction of the transmitter. To support the simultaneous transmission to all users, $K$ independent data streams are transmitted. Accordingly, the total source current distribution $j_{\mathrm{t}}(\mathbf{s})$ can be expressed as a linear superposition of the $K$ independent data streams, i.e.,
\begin{align} \label{transmit_signal_2}
    j_\mathrm{t}(\mathbf{s}) = \sum_{k=1}^{K} w_k(\mathbf{s}) c_k = \mathbf{w} (\mathbf{s})\mathbf{c}.
\end{align}
Here, we denote $\mathbf{w}(\mathbf{s}) = [w_1(\mathbf{s}), \cdots, w_K(\mathbf{s})] \in \mathbb{C}^{1 \times K}$ and $\mathbf{c} = [c_1, \cdots, c_K] \in \mathbb{C}^{K \times 1}$, where $w_k(\mathbf{s})$ and $c_k$ represent the source current pattern and the communication symbol of the $k$-th user, respectively. Without loss of generality, the symbols in $\mathbf{c}$ are assumed to be mutually independent and normalized to unit power, i.e., $\mathbb{E}\left[\mathbf{c}\mathbf{c}^{\mathrm{H}}\right] = \mathbf{I}_K$.

\subsubsection{EM Power}
The total transmit EM power consists of two primary components, including the radiated power $P_{\mathrm{rad}}$ and the dissipated power $P_{\mathrm{diss}}$. Given the spatially continuous nature of the CAPA, the current at each point is affected by the electric fields generated by all other points, recognized as mutual coupling effects. As a result, $P_{\mathrm{rad}}$ accounts not only for the energy radiated into free space, but also for the work done against the induced electric field over the aperture. Additionally, owing to the finite conductivity of the surface, part of the supplied energy is dissipated as heat, which leads to $P_{\mathrm{diss}}$. Therefore, the total EM power can be modeled as
\begin{align}\label{em_power}
    P_{\mathrm{em}} = P_{\mathrm{rad}}+P_{\mathrm{diss}}. 
\end{align} 

For the radiated power $P_{\mathrm{rad}}$,  it is exerted by $\mathbf{j}_\mathrm{t}(\mathbf{s})$ to work against the radiated field $\mathbf{e}_\mathrm{rad}(\mathbf{s})$ across the entire aperture, which can be characterized as \cite{Pizzo_CAPA_MutualCoupling}
\begin{align} \label{rad_power}
    P_{\mathrm{rad}}=\frac{1}{2}\Re\left\{\int_{\mathcal{S}_{\mathrm{T}}}  \mathbb{E}\left[\mathbf{j}_\mathrm{t}^{\mathrm{H}}(\mathbf{s}) \mathbf{e}_\mathrm{rad}(\mathbf{s}) \right]d\mathbf{s} \right\}.
\end{align}
The radiated field $\mathbf{e}_\mathrm{rad}(\mathbf{s})$ is governed by the inhomogeneous Helmholtz wave equation, given by
\begin{align}\label{rad_field}
    \mathbf{e}_\mathrm{rad}(\mathbf{s})  &= \int_{\mathcal{S}_{\mathrm{T}}} \mathbf{G}(\mathbf{s}-\mathbf{z}) \mathbf{j}_\mathrm{t}(\mathbf{z}) d\mathbf{z}.
\end{align}
Here, $\mathbf{G}(\mathbf{s}) \in \mathbb{C}^{3\times 3}$ corresponds to the Green’s function, i.e.,
\begin{align} 
    \mathbf{G}(\mathbf{s}) = -\jmath \kappa_0 Z_0 \left(\mathbf{I}_3+\frac{1}{\kappa_0^2}\nabla^2\right) g(\mathbf{s}),
\end{align}
where $\kappa_0=2\pi/\lambda$, $\lambda$ and $Z_0$ denote the wavenumber, the signal wavelength and the free space impedance, respectively. Furthermore, $g(\mathbf{s})$ represents the scalar Green's function, which is given by
\begin{align}
    g(\mathbf{s}) = \frac{e^{\jmath \kappa_0 \|\mathbf{s}\|}}{4\pi \|\mathbf{s}\|}.
\end{align}
By substituting (\ref{rad_field}) to (\ref{rad_power}), $P_{\mathrm{rad}}$ can be rewritten as
\begin{align}\label{rad_power2}
    P_{\mathrm{rad}}=\frac{1}{2}\Re\left\{\int_{\mathcal{S}_{\mathrm{T}}}  \int_{\mathcal{S}_{\mathrm{T}}} \mathbb{E}\left[\mathbf{j}_\mathrm{t}^{\mathrm{H}}(\mathbf{s}) \mathbf{G}(\mathbf{s}-\mathbf{z}) \mathbf{j}_\mathrm{t}(\mathbf{z})\right] d \mathbf{z} d\mathbf{s} \right\}.
\end{align}

On the other hand,  the dissipated power $P_{\mathrm{diss}}$ arises since the practical transmit surfaces are not ideal conductors. These non-ideal conductors exhibit an inherent surface resistance $Z_s\in \mathbb{R}$, leading to the loss of a portion of the supplied energy as heat. Suppose that the CAPA is a good conductor, $Z_s$ can be characterized as  \cite{pozar2011microwave}
\begin{align}
    Z_s=\sqrt{\frac{\pi f_c \mu_s}{\sigma_s}},
\end{align}
in which  $f_c$, $\mu_s$ and $\sigma_s$ denote the carrier frequency, the surface permeability, and the surface conductivity, respectively. Accordingly, the  dissipated power can be characterized as \cite{pozar2011microwave}
\begin{align} \label{diss_power}
    P_{\mathrm{diss}}=\frac{Z_s}{2} \int_{\mathcal{S}_{\mathrm{T}}}  \left \| \mathbf{j}_\mathrm{t} (\mathbf{s}) \right\|^2  d\mathbf{s} .
\end{align}

By substituting (\ref{rad_power2}) and (\ref{diss_power}) into (\ref{em_power}), the EM power can be recast as 
\begin{align} \label{em_power2}
    P_{\mathrm{em}}&=\frac{1}{2}\Re\left\{\int_{\mathcal{S}_{\mathrm{T}}}  \int_{\mathcal{S}_{\mathrm{T}}} \mathbb{E}\left[\mathbf{j}_\mathrm{t}^{\mathrm{H}}(\mathbf{s}) \mathbf{C}(\mathbf{s}-\mathbf{z}) \mathbf{j}_\mathrm{t}(\mathbf{z})\right] d \mathbf{z} d\mathbf{s} \right\},
\end{align} 
where $\mathbf{C}(\mathbf{s})$ is defined as a mutual coupling kernel independent of $\mathbf{j}_\mathrm{t} (\mathbf{s})$, given by
\begin{align}
    \mathbf{C}(\mathbf{s}) = Z_s \delta(\mathbf{s}) \mathbf{I}_3 + \mathbf{G}(\mathbf{s}).
\end{align}
Here, $\delta(\mathbf{s})$ denotes the Dirac delta function. Furthermore, by substituting  (\ref{transmit_signal}) into (\ref{em_power2}), the EM power can be expressed as
\begin{align}
    P_{\mathrm{em}}
    &=  {\frac{1}{2}\int_{\mathcal{S}_{\mathrm{T}}} \int_{\mathcal{S}_{\mathrm{T}}} \mathbf{w} (\mathbf{s})  {c}_\mathrm{T}(\mathbf{s}-\mathbf{z}) \mathbf{w}^\mathrm{H} (\mathbf{z})  d \mathbf{z} d\mathbf{s}},
\end{align} 
where   ${c}_\mathrm{T}(\mathbf{s}) \in \mathbb{C}$ denotes the uni-polarized coupling kernel, expressed as 
\begin{align} \label{mutual_kernel}
    {c}_\mathrm{T}(\mathbf{s}) &= \Re\left\{\mathbf{u}_\mathrm{T}^\mathrm{T} \mathbf{C}(\mathbf{s}) \mathbf{u}_\mathrm{T} \right\} \notag \\
    &=  {Z_s \delta(\mathbf{s})}  +  {\kappa_0 Z_0 \left( \phi(\mathbf{s})+\frac{1}{\kappa_0^2} \partial _y^2\phi(\mathbf{s})\right)},
\end{align}
with
\begin{align}
    \phi(\mathbf{s}) = \Im\{g(\mathbf{s})\} = \frac{\sin(\kappa_0 \|\mathbf{s}\|)}{4\pi \|\mathbf{s}\|}.
\end{align}
In (\ref{mutual_kernel}), the first and second terms represent the dissipated and radiated mutual coupling kernels, respectively. For notational brevity, these components are denoted by
\begin{subequations}
\begin{align}
{c}_\mathrm{diss}(\mathbf{s})
    &= Z_s \delta(\mathbf{s}),\\
    {c}_\mathrm{rad}(\mathbf{s})
    &= \kappa_0 Z_0 \left( \phi(\mathbf{s})+\frac{1}{\kappa_0^2} \partial _y^2\phi(\mathbf{s})\right).
\end{align}    
\end{subequations}

\subsubsection{Received Signal}
Let $\mathbf{r}_k$ and $\mathbf{u}_{k}$ be the location and the polarization direction of user $k$, respectively. Consequently, the signal received by user $k$ can be expressed as \cite{Zhaolin_CAPA_BF}
\begin{align} \label{user_receive_signal}
    y_k &= \mathbf{u}_{k}^\mathrm{T} \mathbf{e}_{\mathrm{rad}}(\mathbf{r}_k) + z_k \notag \\
    &=\int_{\mathcal{S}_{\mathrm{T}}} \mathbf{u}_{k}^\mathrm{T}\mathbf{G}(\mathbf{r}_k-\mathbf{s})\mathbf{u}_\mathrm{T}  j_\mathrm{t}(\mathbf{s})d\mathbf{s} +z_k \notag \\
    &= \int_{\mathcal{S}_{\mathrm{T}}} h_k(\mathbf{s})   {j}_\mathrm{t}(\mathbf{s}) d\mathbf{s} + z_k. 
\end{align}
Here, $z_k \sim \mathcal{CN}(0, \sigma_k^2)$ is the additive white Gaussian noise (AWGN)  at user $k$. Furthermore,  $h_k(\mathbf{s})$ represents the channel response from CAPA to user $k$, formulated as
\begin{align} \label{channel}
    h_k( \mathbf{s}) &= \mathbf{u}_{k}^\mathrm{T} \mathbf{G}(\mathbf{r}_k-\mathbf{s}) \mathbf{u}_\mathrm{T} \notag \\
    & \overset{\mathrm{(a)}}{=} -\jmath \kappa_0 Z_0 \left(g(\mathbf{r}_k-\mathbf{s})+\frac{1}{\kappa_0^2} \partial _y^2g(\mathbf{r}_k-\mathbf{s})\right).
\end{align}
In step (a), the polarization direction of user $k$ is assumed to be aligned with the $y$-axis, i.e., $\mathbf{u}_{k}=[0,1,0]^\mathrm{T}$. Based on the signal model in (\ref{user_receive_signal}), the SINR of user $k$ is formulated as 
\begin{align}
    \gamma_k &= \frac{| {e}_{k,k}  |^2}{ \sum_{i\neq k }^{K} | {e}_{k,i} |^2+\sigma_k^2},
\end{align}
where we denote $ {e}_{k,i}  = \int_{\mathcal{S}_{\mathrm{T}}} h_k( \mathbf{s}) {w}_i(\mathbf{s}) d \mathbf{s}$.
Accordingly, the sum-rate of this system can be given by  
\begin{align}\label{sum_rate}
    R=\sum_{k=1}^{K}\log \left( 1+ \gamma_k\right).
\end{align}

\subsection{Problem Formulation}
In this paper, we aim to maximize the sum-rate of the system by optimizing the transmit current source pattern $\mathbf{w}(\mathbf{s})$. Specifically, the optimization problem is formulated as   
\begin{subequations} \label{problem_1}
    \begin{align}
        \max_{\mathbf{w}(\mathbf{s})} \quad &\sum_{k=1}^{K}\log \left( 1+ \gamma_k\right)\\
        \label{power_constraint}
        \mathrm{s.t.} \quad & \frac{1}{2}\int_{\mathcal{S}_{\mathrm{T}}} \int_{\mathcal{S}_{\mathrm{T}}} \mathbf{\mathbf{w}(\mathbf{s}) }c_{\mathrm{T}} (\mathbf{s}-\mathbf{z}) \mathbf{w}^\mathrm{H}(\mathbf{z}) d \mathbf{z} d \mathbf{s} \le P_{\mathrm{T}},
    \end{align}
\end{subequations} 
where the constraint (\ref{problem_1}{b}) specifies that the total transmit power is limited by the power budget $P_{\mathrm{T}}$. It is challenging to solve the problem (\ref{problem_1}) owing to two primary reasons. First, the objective function is non-convex with respect to (w.r.t.) the optimization variable, which is a vector of functions. Furthermore, the mutual coupling effect introduces a complex double integral within the power constraint, significantly complicating the optimization process.

\section{Proposed Algorithm}  \label{sec:algorithm}
In this section, a KA-based WMMSE algorithm is proposed to solve problem (\ref{problem_1}). Specifically, we first transform the original problem into an unconstrained MSE minimization problem. Subsequently, the optimal closed-form solution of the beamformer is derived by leveraging CoV and KA. Furthermore, a matrix-based implementation scheme is developed to facilitate the calculation of the optimal beamformer.  

\subsection{KA-based WMMSE Algorithm} \label{sec:KA_WMMSE}
To solve problem (\ref{problem_1}), we first transform it into an unconstrained optimization problem by invoking the following lemma.  
\begin{lemma} \label{equal_power_lemma}
    \normalfont
    \emph{(Equivalent Problem)} Solving problem (\ref{problem_1}) can be equivalently reformulated as finding the optimal solution to the following unconstrained optimization problem:
    \begin{align} \label{unconstrained_problem}
     \max_{{\mathbf{w}}(\mathbf{s})} \quad & \widetilde{R}=\sum_{k=1}^{K}\log \left( 1+ \widetilde{\gamma}_k\right),
    \end{align}  
    where 
        \begin{subequations}\label{expression_SINR_sigma}
            \begin{align}
        \widetilde{\gamma}_k &=   \frac{| {e}_{k,k}   |^2}{ \sum_{i\neq k }^{K} | {e}_{k,i}  |^2+\widetilde{\sigma}_k^2}, \\
        \widetilde{\sigma}_k^2 &= \frac{\sigma_k^2}{2 P_\mathrm{T}}\int_{\mathcal{S}_{\mathrm{T}}} \int_{\mathcal{S}_{\mathrm{T}}}  {\mathbf{w}}(\mathbf{s}) c_{\mathrm{T}} (\mathbf{s}-\mathbf{z}) {\mathbf{w}}^\mathrm{H}(\mathbf{z}) d \mathbf{z} d \mathbf{s}.
    \end{align}
    \end{subequations}
     Let $\widetilde{\mathbf{w}}^\star(\mathbf{s})$ denote the optimal solution to the unconstrained optimization problem \eqref{unconstrained_problem}. Accordingly, the optimal solution to problem \eqref{problem_1} can be determined by 
    \begin{align}\label{scale_w}
     \!\!\!{\mathbf{w}}^\star(\mathbf{s})\!=\!\sqrt{\frac{2 P_{\mathrm{T}}}{\int_{\mathcal{S}_{\mathrm{T}}} \!\int_{\mathcal{S}_{\mathrm{T}}}  \!\widetilde{\mathbf{w}}^\star(\mathbf{s}) c_{\mathrm{T}} (\mathbf{s}-\mathbf{z}) (\widetilde{\mathbf{w}}^\star(\mathbf{z})) ^\mathrm{H} d \mathbf{z} d \mathbf{s}}}\widetilde{\mathbf{w}}^\star(\mathbf{s}).
\end{align}
\end{lemma}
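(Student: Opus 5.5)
Write $P(\mathbf{w}) = \frac{1}{2}\int_{\mathcal{S}_{\mathrm{T}}}\int_{\mathcal{S}_{\mathrm{T}}} \mathbf{w}(\mathbf{s}) c_{\mathrm{T}}(\mathbf{s}-\mathbf{z})\mathbf{w}^{\mathrm{H}}(\mathbf{z})\, d\mathbf{z}\, d\mathbf{s}$ for the EM power functional. The argument combines two facts.

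\textbf{(i) The power constraint is active at the optimum of problem (\ref{problem_1}).} $P(\mathbf{w})$ is quadratic, so $P(\alpha\mathbf{w}) = \alpha^2 P(\mathbf{w})$ for every scalar $\alpha>0$. It is also real and nonnegative, since $c_{\mathrm{T}}$ is the real part of a passive coupling kernel: the dissipated term $Z_s\delta$ contributes $Z_s\int\|\mathbf{w}\|^2 > 0$, and the radiated term is a positive-semidefinite kernel because radiated power is nonnegative. Hence $P(\mathbf{w})>0$ for any nonzero $\mathbf{w}$.

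Under the scaling $\mathbf{w}\mapsto\alpha\mathbf{w}$, each $e_{k,i}$ becomes $\alpha e_{k,i}$, so
\begin{align}
\gamma_k(\alpha\mathbf{w}) = \frac{|e_{k,k}|^2}{\sum_{i\neq k}|e_{k,i}|^2 + \sigma_k^2/\alpha^2}.
\end{align}
This is nondecreasing in $\alpha$, and strictly increasing whenever $e_{k,k}\neq 0$. So if a feasible $\mathbf{w}$ had $P(\mathbf{w})<P_{\mathrm{T}}$, rescaling it until $P=P_{\mathrm{T}}$ would not decrease the sum-rate. We may therefore restrict problem (\ref{problem_1}) to $P(\mathbf{w})=P_{\mathrm{T}}$ without changing its optimal value.

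\textbf{(ii) $\widetilde{R}$ is scale-invariant and agrees with $R$ on the constraint surface.} By definition, $\widetilde{\sigma}_k^2 = \sigma_k^2 P(\mathbf{w})/P_{\mathrm{T}}$. Both the numerator and the denominator of $\widetilde{\gamma}_k$ are then homogeneous of degree two in $\mathbf{w}$, so $\widetilde{R}(\alpha\mathbf{w}) = \widetilde{R}(\mathbf{w})$ for all $\alpha>0$. On the surface $P(\mathbf{w})=P_{\mathrm{T}}$ we have $\widetilde{\sigma}_k^2=\sigma_k^2$, so $\widetilde{R}=R$ there.

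For any nonzero $\mathbf{w}$, set $\alpha = \sqrt{P_{\mathrm{T}}/P(\mathbf{w})}$. This is exactly the normalisation in (\ref{scale_w}), written with $2P_{\mathrm{T}}$ over the double integral. Then $P(\alpha\mathbf{w})=P_{\mathrm{T}}$ and
\begin{align}
\widetilde{R}(\mathbf{w}) = \widetilde{R}(\alpha\mathbf{w}) = R(\alpha\mathbf{w}).
\end{align}

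\textbf{Equality of the two optima.}
\begin{itemize}
\item If $\widetilde{\mathbf{w}}^\star$ maximizes $\widetilde{R}$, then for any $\mathbf{w}$ with $P(\mathbf{w})=P_{\mathrm{T}}$ we get $R(\mathbf{w}) = \widetilde{R}(\mathbf{w}) \le \widetilde{R}(\widetilde{\mathbf{w}}^\star) = R(\mathbf{w}^\star)$, where $\mathbf{w}^\star$ is the rescaling (\ref{scale_w}). Combined with (i), $\mathbf{w}^\star$ is optimal for problem (\ref{problem_1}).
\item Conversely, any optimum of problem (\ref{problem_1}) can be taken to satisfy $P=P_{\mathrm{T}}$, and it is then optimal for (\ref{unconstrained_problem}) by the same chain of inequalities.
\end{itemize}

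\textbf{Main obstacle.} The only delicate point is showing $P(\mathbf{w})>0$ for nonzero $\mathbf{w}$, so that the scaling factor is well defined. I would handle this by noting that the $Z_s\delta$ term gives the strictly positive contribution $Z_s\int\|\mathbf{w}\|^2$. For the radiated part, I would use the wavenumber-domain representation: the Fourier transform of $\phi$ is a nonnegative measure supported on the sphere $\|\boldsymbol{\kappa}\|=\kappa_0$. Applying $(1+\kappa_0^{-2}\partial_y^2)$ multiplies it by $1-\kappa_y^2/\kappa_0^2 \ge 0$. Hence the radiated kernel is positive semidefinite by Bochner's theorem.

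The degenerate case $\mathbf{w}\equiv 0$ gives zero rate and can be excluded.
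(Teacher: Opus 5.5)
Your proof is correct and follows the same route as the paper: show that the power constraint is active at the optimum, then observe that rescaling to full power via the normalization in the lemma leaves the objective value unchanged. Your version makes explicit what the paper leaves as ``readily proved'' and ``can be verified'', namely the scale invariance of $\widetilde{R}$ and the strict positivity of the power functional that makes the normalization well defined.
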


\begin{IEEEproof}
     {It is readily proved that full power should be exploited to maximize the sum-rate, i.e., the optimum is attained when the power constraint holds with equality. Clearly, the solution of \eqref{scale_w} satisfies this full-power transmission condition. Furthermore, by substituting \eqref{scale_w} into \eqref{problem_1}, it can be verified that the objective value of \eqref{problem_1} is identical to that of \eqref{unconstrained_problem}, thereby completing the proof.}
\end{IEEEproof}

For the unconstrained problem (\ref{unconstrained_problem}), an equivalent received signal model for user $k$ can be formulated as 
\begin{align} \label{equivalent_signal_model}
    \widetilde{y}_k  &=  \int_{\mathcal{S}_{\mathrm{T}}} h_k( \mathbf{s})   {j}_\mathrm{t}(\mathbf{s}) d\mathbf{s} +  \widetilde{z}_k  \overset{\mathrm{(a)}}{=} \mathbf{e}_k  \mathbf{c}+ \widetilde{ {z}}_k,
\end{align}
where $\widetilde{ {z}}_k \sim \mathcal{CN}(0, \widetilde{\sigma}_k^2) $ is the equivalent noise. More particularly, step (a) is obtained following equation (\ref{transmit_signal_2}), and $\mathbf{e}_k  = [{e}_{k,1}, \cdots,{e}_{k,K}] $ is introduced to streamline the expression. 

Given the equivalent received signal model in (\ref{equivalent_signal_model}) and the unconstrained problem in (\ref{unconstrained_problem}), the WMMSE framework is employed to transform the sum-rate maximization problem into a weighted MSE minimization problem. Specifically, we introduce a receiver ${v}_k(\mathbf{s}) \in \mathbb{C}$ to estimate the data symbol for user $k$. Accordingly, the estimated symbol can be given by 
\begin{align}
    \hat{{c}}_k&=  {v}_k^* \widetilde{y}_k = {v}_k^*\mathbf{e}_k   \mathbf{c} + \hat{{z}}_k. 
\end{align}
where $\hat{{z}}_k =   {v}_k^* \widetilde{ {z}}_k  \sim \mathcal{CN}\left( {0},\widetilde{\sigma}_k^2   |{v}_k|^2 \right)$ represents the estimation noise. The MSE between the estimated symbol and the true symbol for user $k$ is then given by 
\begin{align}\label{MSE}
    \varepsilon_k=&\mathbb{E}\left[\left| \hat{{c}}_k-c_k\right|^2\right] \notag\\
        =& |v_k|^2 \mathbf{e}_k  \mathbf{e}_k^\mathrm{H} + 1-2\Re\left\{ {v}_k^* e_{k,k}\right\} +\widetilde{\sigma}_k^2 |{v}_k|^2 \notag\\
    =& \sum_{i=1}^{K} \left|\int_{\mathcal{S}_{\mathrm{T}}} {g}_k^*(\mathbf{s})  {w}_i(\mathbf{s}) d\mathbf{s} \right|^2 + 1 \notag\\
    &~~ -2\Re\left\{\int_{\mathcal{S}_{\mathrm{T}}} {g}_k^*(\mathbf{s})  {w}_k(\mathbf{s}) d\mathbf{s}\right\} +\widetilde{\sigma}_k^2 |{v}_k|^2,
\end{align}
where we denote ${g}_k(\mathbf{s})= {v}_k h_k^*  ( \mathbf{s})$. 
 The optimal MMSE receiver $ {v}_k^{\mathrm{MMSE}} $ can be obtained by solving $\partial \varepsilon_k/\partial {v}_k = 0$,  which yields 
\begin{align} \label{MMSE_receiver}
    {v}_k^{\mathrm{MMSE}}  &= \frac{e_{k,k}}{\mathbf{e}_k  \mathbf{e}_k^\mathrm{H}+ \widetilde{\sigma}_k^2}.
\end{align}
By substituting (\ref{MMSE_receiver}) to (\ref{MSE}), the minimum MSE achieved by $ {v}_k^{\mathrm{MMSE}} $ can be given by 
\begin{align} \label{MMSE}
   \varepsilon_k^\mathrm{MMSE} &= 1- \frac{|e_{k,k}|^2}{\widetilde{\sigma}_k^2+\mathbf{e}_k \mathbf{e}_k^\mathrm{H} }. 
\end{align}
Comparing (\ref{unconstrained_problem}) and (\ref{MMSE}),  a rate-MMSE relationship of $\widetilde{R}=\sum_{k=1}^{K}\log \left( \mu_k\right)$ can be observed, where $\mu_k=(\varepsilon_k^\mathrm{MMSE})^{-1}$. According to \cite{WMMSE}, problem (\ref{unconstrained_problem}) can achieve the same optimal solution as the following weighted MSE minimization problem: 
\begin{align} \label{WMMSE_formulation_w}
     \min_{ {\mathbf{w}}(\mathbf{s})} \quad & f_c =\sum_{k=1}^{K} \mu_k \varepsilon_k.
\end{align}
\begin{figure*}[!h]
\begin{subequations}\label{MMSE_problem_2_3}
      \begin{align} 
    f_c 
      & =  \sum_{k=1}^{K}\sum_{i=1}^{K}   \mu_i  \left|\int_{\mathcal{S}_{\mathrm{T}}}\! {g}_i^*(\mathbf{s})  {w}_k(\mathbf{s}) d\mathbf{s} \right|^2 \! \!-\! \sum_{k=1}^{K} \! 2\Re\left\{ \!\int_{\mathcal{S}_{\mathrm{T}}} \!\mu_k{g}_k^*(\mathbf{s})  {w}_k(\mathbf{s}) d\mathbf{s}\right\}  
     \!+ \! \sum_{k=1}^{K} \!\mu_k \frac{\sigma_k^2}{2 P_\mathrm{T}}|{v}_k|^2 \!\int_{\mathcal{S}_{\mathrm{T}}}  \!\int_{\mathcal{S}_{\mathrm{T}}} \! \! {\mathbf{w}}(\mathbf{s}) c_{\mathrm{T}} (\mathbf{s}\!-\!\mathbf{z})  {\mathbf{w}}^\mathrm{H}(\mathbf{z}) d \mathbf{z} d \mathbf{s}  \\
     & =   \int_{\mathcal{S}_{\mathrm{T}}} \int_{\mathcal{S}_{\mathrm{T}}}    \mathbf{g}(\mathbf{s}) \mathbf{U} \mathbf{g}^\mathrm{H}(\mathbf{z}) {\mathbf{w}}(\mathbf{z})   {\mathbf{w}}^\mathrm{H}(\mathbf{s})  d\mathbf{s} d\mathbf{z}  -  2\Re\left\{\int_{\mathcal{S}_{\mathrm{T}}}  {\mathbf{w}}(\mathbf{s}) \mathbf{U} \mathbf{g}^\mathrm{H}(\mathbf{s})d\mathbf{s}\right\} + \frac{1}{\beta}\int_{\mathcal{S}_{\mathrm{T}}} \int_{\mathcal{S}_{\mathrm{T}}}   {\mathbf{w}}(\mathbf{s}) c_{\mathrm{T}} (\mathbf{s}-\mathbf{z})  {\mathbf{w}}^\mathrm{H}(\mathbf{z}) d \mathbf{z} d \mathbf{s}
\end{align}  
\end{subequations}
\hrulefill
\end{figure*}

To better examine the objective function, it is expanded in (\ref{MMSE_problem_2_3}{a}) and transformed into the matrix form in (\ref{MMSE_problem_2_3}{b}), where we denote  
\begin{subequations}\label{g_U_beta}
\begin{align}
 \mathbf{g}(\mathbf{s})&=[g_1(\mathbf{s}),\cdots,g_K(\mathbf{s})] \in \mathbb{C}^{1 \times K}, \\
    \mathbf{U}&=\mathrm{diag}(\mu_1,\cdots,\mu_K)\in \mathbb{R}^{K \times K},\\
    \frac{1}{\beta} &=\sum_{k=1}^{K} \mu_k \frac{\sigma_k^2}{2 P_\mathrm{T}}|{v}_k|^2.
\end{align}    
\end{subequations}
It can be observed from the (\ref{MMSE_problem_2_3}{b}) that the objective function is a convex quadratic functional w.r.t. ${\mathbf{w}}(\mathbf{s})$. As a consequence, the optimal beamforming structure can be derived by leveraging the CoV, as presented in the following proposition. 
\begin{proposition} \label{optimal_structure_prop}
    \normalfont
    \emph{(Optimal Beamforming Condition)} Given $\mathbf{g}(\mathbf{s})$, $\mathbf{U}$ and $\beta$, the optimal ${\mathbf{w}}(\mathbf{s})$ that minimizes the functional $f_c$ must satisfy the following equation: 
    \begin{align} \label{optimal_structure}
      \int_{\mathcal{S}_{\mathrm{T}}} \! \!\! c_{\mathrm{T}} (\mathbf{s}\!-\!\mathbf{z})  {\mathbf{w}}(\mathbf{z})d\mathbf{z}\!=\!\beta\mathbf{g}(\mathbf{s}) \mathbf{U} \left(\mathbf{I}\!-\!\int_{\mathcal{S}_{\mathrm{T}}} \! \! \!\mathbf{g}^\mathrm{H}(\mathbf{z})   {\mathbf{w}}  (\mathbf{z})d\mathbf{z}  \right).
\end{align}
\end{proposition}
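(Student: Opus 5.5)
Since $f_c$ in (\ref{MMSE_problem_2_3}{b}) is a real-valued quadratic functional of the row-vector function $\mathbf{w}(\mathbf{s})$, we will derive the condition as its first-order stationarity condition via the calculus of variations. Treat $\mathbf{w}(\mathbf{s})$ and $\mathbf{w}^{\mathrm H}(\mathbf{s})$ as independent in the Wirtinger sense. Perturb $\mathbf{w}(\mathbf{s})\to\mathbf{w}(\mathbf{s})+\epsilon\,\boldsymbol{\eta}(\mathbf{s})$ with an arbitrary test function $\boldsymbol{\eta}(\mathbf{s})\in\mathbb{C}^{1\times K}$ supported on $\mathcal{S}_{\mathrm T}$. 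We then require $\frac{d}{d\epsilon}f_c\big|_{\epsilon=0}=0$ for every $\boldsymbol{\eta}$, with $\epsilon$ ranging over both real and imaginary directions. Equivalently, we set the functional derivative $\delta f_c/\delta \mathbf{w}^{*}(\mathbf{s})$ to zero.

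We compute the variation term by term.

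1. The first term is $\int\!\int \mathbf{g}(\mathbf{s})\mathbf{U}\mathbf{g}^{\mathrm H}(\mathbf{z})\mathbf{w}(\mathbf{z})\mathbf{w}^{\mathrm H}(\mathbf{s})\,d\mathbf{s}\,d\mathbf{z}$, which can be rewritten as $\mathrm{tr}$ of a Hermitian form. Its variation contributes $2\Re\{\int \boldsymbol{\eta}(\mathbf{s})\,[\ldots]\}$ with $[\ldots]$ equal to $\mathbf{g}(\mathbf{s})\mathbf{U}\int\mathbf{g}^{\mathrm H}(\mathbf{z})\mathbf{w}(\mathbf{z})d\mathbf{z}$, transposed and conjugated appropriately.
2. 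The linear term $-2\Re\{\int\mathbf{w}\mathbf{U}\mathbf{g}^{\mathrm H}\}$ contributes $-\mathbf{g}(\mathbf{s})\mathbf{U}$.
3. The coupling term $\frac1\beta\int\!\int \mathbf{w}(\mathbf{s})c_{\mathrm T}(\mathbf{s}-\mathbf{z})\mathbf{w}^{\mathrm H}(\mathbf{z})$ contributes $\frac1\beta\int c_{\mathrm T}(\mathbf{s}-\mathbf{z})\mathbf{w}(\mathbf{z})d\mathbf{z}$.

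For the third contribution we need two properties of the kernel, which we check from (\ref{mutual_kernel}). First, $c_{\mathrm T}$ is real-valued, being defined as a real part. Second, it is even, $c_{\mathrm T}(\mathbf{s}-\mathbf{z})=c_{\mathrm T}(\mathbf{z}-\mathbf{s})$, since $\delta$, $\phi$, and $\partial_y^2\phi$ are all even. Together these make the double integral a Hermitian form, so the two cross terms of the variation combine into $2\Re\{\cdot\}$.

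We then collect the three contributions and invoke the fundamental lemma of the calculus of variations: because $\boldsymbol{\eta}$ is arbitrary, the bracket must vanish pointwise for $\mathbf{s}\in\mathcal{S}_{\mathrm T}$. This gives
\[
\tfrac{1}{\beta}\int_{\mathcal{S}_{\mathrm T}} c_{\mathrm T}(\mathbf{s}-\mathbf{z})\mathbf{w}(\mathbf{z})d\mathbf{z}+\mathbf{g}(\mathbf{s})\mathbf{U}\int_{\mathcal{S}_{\mathrm T}}\mathbf{g}^{\mathrm H}(\mathbf{z})\mathbf{w}(\mathbf{z})d\mathbf{z}-\mathbf{g}(\mathbf{s})\mathbf{U}=\mathbf{0}.
\]
Multiplying by $\beta$ and factoring $\mathbf{g}(\mathbf{s})\mathbf{U}$ yields (\ref{optimal_structure}). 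Since $f_c$ was noted to be convex (the coupling form is positive, as it represents physical power, and the first term is a sum of squared magnitudes), this stationarity condition is necessary, as the statement claims, and in fact also sufficient.

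The main obstacle is the bookkeeping of conjugates and transposes in the row-vector setting, together with justifying the symmetry of the distributional kernel. The kernel contains $\delta$ and a second derivative $\partial_y^2\phi$, so the variation must be understood in the weak sense. We plan to handle this by noting that $\phi$ is smooth (since $\sin(\kappa_0 r)/r$ is analytic), so $\partial_y^2\phi$ is a bona fide even function and the $\delta$ term simply yields $Z_s\mathbf{w}(\mathbf{s})$.

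To keep the conjugation bookkeeping clean, we will work column-wise: for each $k$, vary $w_k$ alone. This gives the $k$-th column of the identity, with $\int c_{\mathrm T}w_k$ on the left and $\beta\,\mathbf{g}(\mathbf{s})\mathbf{U}(\mathbf{e}_k-\int\mathbf{g}^{\mathrm H}w_k)$ on the right, where $\mathbf{e}_k$ denotes the $k$-th standard basis vector. Stacking these columns over $k$ gives the matrix statement.
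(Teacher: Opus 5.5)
Your proposal is correct and follows essentially the same route as the paper's proof: perturb $\mathbf{w}(\mathbf{s})\to\mathbf{w}(\mathbf{s})+\epsilon\boldsymbol{\eta}(\mathbf{s})$, extract the linear term $2\epsilon\Re\{\int\boldsymbol{\eta}(\mathbf{s})\mathbf{p}^{\mathrm H}(\mathbf{s})d\mathbf{s}\}$, and invoke the fundamental lemma of the calculus of variations to get $\mathbf{p}(\mathbf{s})=\mathbf{0}$. This is exactly your collected equation. Your explicit checks are welcome additions but do not change the argument: that $c_{\mathrm T}$ is real and even, which is needed to merge the two cross terms of the coupling form and which the paper uses without comment, and that $\phi$ is smooth.
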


\begin{IEEEproof}
    Please refer to Appendix \ref{app_optimal_structure_prop}.
\end{IEEEproof}
Although the optimality condition for ${\mathbf{w}}(\mathbf{s})$ is derived in (\ref{optimal_structure}), solving such an equation is quite challenging. Specifically, (\ref{optimal_structure}) is a Fredholm integral equation, where the functional variable $\mathbf{w}(\mathbf{s})$ is coupled across different integral terms. Nevertheless, the following proposition provides a closed-form solution for ${\mathbf{w}}(\mathbf{s})$ that satisfies (\ref{optimal_structure}). 
\begin{proposition} \label{closed_form_prop}
    \normalfont
    \emph{(Closed-form Solution of Optimal Beamfomring)} 
    Define $c_{\mathrm{T}}^{-1} (\mathbf{s}'-\mathbf{s})$ as the inverse of the coupling kernel $c_{\mathrm{T}} (\mathbf{s}-\mathbf{z})$, which satisfies the following identity: 
\begin{align}
    \int_{\mathcal{S}_{\mathrm{T}}} c_{\mathrm{T}}^{-1} (\mathbf{s}'-\mathbf{s})c_{\mathrm{T}} (\mathbf{s}-\mathbf{z})  d\mathbf{s} = \delta \left(\mathbf{s}'-\mathbf{z}\right).
\end{align}
Given the inverse kernel $c_{\mathrm{T}}^{-1} (\mathbf{s}'-\mathbf{s})$, the closed-form solution of ${\mathbf{w}}(\mathbf{s}')$ can be derived as 
    \begin{align} \label{closed_form}
     {\mathbf{w}}(\mathbf{s}') &=    \widetilde{\mathbf{g}}(\mathbf{s}') \mathbf{U} \left(\frac{1}{\beta}\mathbf{I}+\widetilde{\mathbf{G}} \mathbf{U}\right)^{-1},
\end{align}
where
\begin{subequations}  \label{g_G_tilde}
 \begin{align}
    \widetilde{\mathbf{g}}(\mathbf{s}')&=\int_{\mathcal{S}_{\mathrm{T}}} c_{\mathrm{T}}^{-1} (\mathbf{s}'-\mathbf{s}) \mathbf{g}(\mathbf{s})d\mathbf{s},\\
    \widetilde{\mathbf{G}}    &=\int_{\mathcal{S}_{\mathrm{T}}}\int_{\mathcal{S}_{\mathrm{T}}}\mathbf{g}^\mathrm{H}(\mathbf{s}')c_{\mathrm{T}}^{-1} (\mathbf{s}'-\mathbf{s}) \mathbf{g}(\mathbf{s}) d\mathbf{s}'d\mathbf{s}.
\end{align}   
\end{subequations}
\end{proposition}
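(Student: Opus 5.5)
The plan is to start from the optimality condition \eqref{optimal_structure} of Proposition~\ref{optimal_structure_prop} and remove the coupling kernel on its left-hand side with the inverse kernel. The Fredholm equation then collapses to a finite-dimensional $K\times K$ linear system.

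First, introduce the unknown constant matrix
\begin{align*}
\mathbf{A}=\int_{\mathcal{S}_{\mathrm{T}}}\mathbf{g}^\mathrm{H}(\mathbf{z})\mathbf{w}(\mathbf{z})d\mathbf{z}\in\mathbb{C}^{K\times K}.
\end{align*}
With this notation, the right-hand side of \eqref{optimal_structure} becomes $\beta\mathbf{g}(\mathbf{s})\mathbf{U}(\mathbf{I}-\mathbf{A})$. Its dependence on $\mathbf{s}$ is then only through $\mathbf{g}(\mathbf{s})$, multiplied on the right by a constant matrix.

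Second, multiply both sides of \eqref{optimal_structure} by $c_{\mathrm{T}}^{-1}(\mathbf{s}'-\mathbf{s})$ and integrate over $\mathbf{s}\in\mathcal{S}_{\mathrm{T}}$. On the left, I interchange the order of integration and apply the defining identity of the inverse kernel, which produces $\delta(\mathbf{s}'-\mathbf{z})$. The sifting property then gives $\mathbf{w}(\mathbf{s}')$. On the right, the constant factor $\mathbf{U}(\mathbf{I}-\mathbf{A})$ comes out of the integral, which yields $\mathbf{w}(\mathbf{s}')=\beta\widetilde{\mathbf{g}}(\mathbf{s}')\mathbf{U}(\mathbf{I}-\mathbf{A})$, with $\widetilde{\mathbf{g}}$ as in \eqref{g_G_tilde}. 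This shows that $\mathbf{w}$ lies in the span of the $K$ functions $\widetilde{\mathbf{g}}$. The only unknown left is $\mathbf{A}$.

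Third, determine $\mathbf{A}$ self-consistently. I left-multiply the expression for $\mathbf{w}(\mathbf{s}')$ by $\mathbf{g}^\mathrm{H}(\mathbf{s}')$ and integrate over $\mathbf{s}'$. By the definition in \eqref{g_G_tilde}, this gives $\mathbf{A}=\beta\widetilde{\mathbf{G}}\mathbf{U}(\mathbf{I}-\mathbf{A})$. Rearranging gives $(\mathbf{I}+\beta\widetilde{\mathbf{G}}\mathbf{U})(\mathbf{I}-\mathbf{A})=\mathbf{I}$, so
\begin{align*}
\mathbf{I}-\mathbf{A}=(\mathbf{I}+\beta\widetilde{\mathbf{G}}\mathbf{U})^{-1}=\tfrac{1}{\beta}\left(\tfrac{1}{\beta}\mathbf{I}+\widetilde{\mathbf{G}}\mathbf{U}\right)^{-1}.
\end{align*}
Substituting this back into the expression for $\mathbf{w}(\mathbf{s}')$ gives \eqref{closed_form} directly.

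The main obstacle is justifying that $\frac{1}{\beta}\mathbf{I}+\widetilde{\mathbf{G}}\mathbf{U}$ is invertible. I would argue it as follows.
\begin{itemize}
\item Since $c_{\mathrm{T}}$ is a Hermitian positive kernel, $\widetilde{\mathbf{G}}$ is Hermitian positive semidefinite. This holds because $c_{\mathrm{T}}$ is real and even, it defines a positive operator (the radiated power is nonnegative and $Z_s>0$), and so its inverse is also positive.
\item Because $\mathbf{U}$ is positive diagonal, $\widetilde{\mathbf{G}}\mathbf{U}$ is similar to $\mathbf{U}^{1/2}\widetilde{\mathbf{G}}\mathbf{U}^{1/2}\succeq 0$, so all of its eigenvalues are nonnegative.
\item With $\beta>0$, the matrix $\frac{1}{\beta}\mathbf{I}+\widetilde{\mathbf{G}}\mathbf{U}$ therefore has all eigenvalues at least $1/\beta>0$ and is invertible.
\end{itemize}
The existence of $c_{\mathrm{T}}^{-1}$ and the interchange of integrals are taken as given by the hypothesis of the proposition. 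The strict positivity supplied by the dissipation term $Z_s\delta$ is what makes this assumption reasonable.

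Finally, uniqueness follows from convexity of $f_c$: Proposition~\ref{optimal_structure_prop} gives a necessary condition, and every step above is reversible, so the resulting $\mathbf{w}$ is the minimizer.
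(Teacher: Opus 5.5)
Your proof is correct and follows the paper's argument: apply the inverse kernel to the condition of Proposition~\ref{optimal_structure_prop} to get $\mathbf{w}(\mathbf{s}')=\beta\widetilde{\mathbf{g}}(\mathbf{s}')\mathbf{U}(\mathbf{I}-\mathbf{A})$, project onto $\mathbf{g}^\mathrm{H}(\mathbf{s}')$ to obtain the $K\times K$ equation $\mathbf{A}=\beta\widetilde{\mathbf{G}}\mathbf{U}(\mathbf{I}-\mathbf{A})$, and substitute back. Solving directly for $\mathbf{I}-\mathbf{A}$ avoids the paper's final matrix-identity step, and your invertibility argument for $\frac{1}{\beta}\mathbf{I}+\widetilde{\mathbf{G}}\mathbf{U}$ and the sufficiency remark via convexity are sound additions that the paper leaves implicit.
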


\begin{IEEEproof}
    Please refer to Appendix \ref{app_closed_form_prop}.
\end{IEEEproof}

Although ${\mathbf{w}}(\mathbf{s}')$ can be determined via (\ref{closed_form}), the calculation of $\widetilde{\mathbf{g}}(\mathbf{s}')$ and $\widetilde{\mathbf{G}}$ relies on the inverse kernel $c_{\mathrm{T}}^{-1} (\mathbf{s}'-\mathbf{s})$. To this end, we propose the KA method to determine $c_{\mathrm{T}}^{-1} (\mathbf{s}'-\mathbf{s})$. Recalling the expression of ${c}_\mathrm{T}(\mathbf{s})$ in (\ref{mutual_kernel}), it is comprised of the dissipated mutual coupling kernel ${c}_\mathrm{diss}(\mathbf{s})$ and the radiated mutual coupling kernel ${c}_\mathrm{rad}(\mathbf{s})$. Notably, ${c}_\mathrm{rad}(\mathbf{s})$ involves a complex second-order derivative, which complicates the determination of $c_{\mathrm{T}}^{-1} (\mathbf{s}'-\mathbf{s})$.  To address this, an approximation of ${c}_\mathrm{rad}(\mathbf{s})$ with a simplified form is derived based on a wavenumber-domain approach, thereby facilitating the calculation of $c_{\mathrm{T}}^{-1} (\mathbf{s}'-\mathbf{s})$.  

Before proceeding with the derivation, we present the two-dimensional Fourier transform for a function $f(\mathbf{s})$ and the inverse transform, which are respectively given by 
\begin{subequations}\label{FT}
   \begin{align}
    F(\boldsymbol{\kappa})&=\mathcal{F}\{f\}(\boldsymbol{\kappa})=\iint_{-\infty}^{+\infty} f(\mathbf{s})e^{-\jmath \boldsymbol{\kappa}^\mathrm{T}\mathbf{s}}ds_{x} ds_{y},\\
    f(\mathbf{s})&=\frac{1}{(2\pi)^2}\iint_{-\infty}^{+\infty} F(\boldsymbol{\kappa})e^{\jmath \boldsymbol{\kappa}^\mathrm{T}\mathbf{s}}d\kappa_{x} d\kappa_{y}.
\end{align} 
\end{subequations}
Here,  $F(\boldsymbol{\kappa})$ is the wavenumber-domain representation of $f(\mathbf{s})$, where $\boldsymbol{\kappa}=[\kappa_x,\kappa_y,0]^\mathrm{T}$ is the variable in wavenumber-domain. Leveraging the linearity and differentiation properties of the Fourier transform, the wavenumber-domain representation of ${c}_\mathrm{rad}(\mathbf{s})$ can be expressed as 
\begin{align} \label{crad_FT}
    {C}_\mathrm{rad}(\boldsymbol{\kappa})
    &= \kappa_0 Z_0 \left( \mathcal{F}\{\phi(\mathbf{s})\}-\frac{\kappa_y^2}{\kappa_0^2}  \mathcal{F}\{\phi(\mathbf{s})\}\right), 
\end{align}
where the Fourier transform of $\phi(\mathbf{s})$ is provided in the following proposition. 
\begin{proposition} \label{FT_prop}
    \normalfont
    \emph{(Fourier Transform of $\phi(\mathbf{s})$)} Based on the Weyl identity and the Euler's formula, the Fourier transform of $\phi(\mathbf{s})$ can be obtained as
    \begin{align} \label{phi_FT}
    \mathcal{F}\{\phi(\mathbf{s})\} = \begin{cases}
        \frac{1}{2\sqrt{\kappa_0^2-\|\boldsymbol{\kappa}\|^2}},& \|\boldsymbol{\kappa}\| \leq \kappa_0, \\
        0,& \|\boldsymbol{\kappa}\| \geq \kappa_0.
    \end{cases}
\end{align}
\end{proposition}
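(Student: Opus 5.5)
We will derive the Fourier transform of $\phi(\mathbf{s})=\Im\{g(\mathbf{s})\}$ on the plane $s_z=0$ from the Weyl (angular-spectrum) identity. The idea is to write $\phi$ as a combination of $g$ and its conjugate, so that the evanescent parts cancel.

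\textbf{Step 1 (Euler form).} By Euler's formula,
\[
\phi(\mathbf{s})=\frac{1}{2\jmath}\bigl(g(\mathbf{s})-g^*(\mathbf{s})\bigr).
\]
By linearity it then suffices to know $\mathcal{F}\{g\}$ and $\mathcal{F}\{g^*\}$ restricted to $s_z=0$.

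\textbf{Step 2 (Weyl identity).} Recall
\[
\frac{e^{\jmath\kappa_0\|\mathbf{s}\|}}{4\pi\|\mathbf{s}\|}=\frac{\jmath}{8\pi^2}\iint \frac{e^{\jmath(\kappa_x s_x+\kappa_y s_y+\kappa_z|s_z|)}}{\kappa_z}\,d\kappa_x d\kappa_y,
\qquad \kappa_z=\sqrt{\kappa_0^2-\|\boldsymbol{\kappa}\|^2}.
\]
The branch is chosen with $\Im\{\kappa_z\}\ge 0$, so that $\kappa_z=\jmath\sqrt{\|\boldsymbol{\kappa}\|^2-\kappa_0^2}$ in the evanescent region. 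Setting $s_z=0$ and comparing with the inverse transform in \eqref{FT} gives
\[
\mathcal{F}\{g\}(\boldsymbol{\kappa})=\frac{\jmath}{2\kappa_z}.
\]

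\textbf{Step 3 (conjugate term).} Since $g^*(\mathbf{s})$ is real-symmetric in $\mathbf{s}$, we have $\mathcal{F}\{g^*\}(\boldsymbol{\kappa})=\bigl(\mathcal{F}\{g\}(-\boldsymbol{\kappa})\bigr)^*=(\jmath/(2\kappa_z))^*$. Substituting into Step 1:

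In the propagating region $\|\boldsymbol{\kappa}\|<\kappa_0$, $\kappa_z$ is real, so
\[
\mathcal{F}\{\phi\}=\frac{1}{2\jmath}\Bigl(\frac{\jmath}{2\kappa_z}+\frac{\jmath}{2\kappa_z}\Bigr)=\frac{1}{2\kappa_z}.
\]

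In the evanescent region $\|\boldsymbol{\kappa}\|>\kappa_0$, $\jmath/(2\kappa_z)=1/\bigl(2\sqrt{\|\boldsymbol{\kappa}\|^2-\kappa_0^2}\bigr)$ is real. It therefore equals its conjugate, and the difference vanishes, giving $0$.

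This reproduces \eqref{phi_FT}; the boundary circle has measure zero, so the value assigned there does not matter.

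\textbf{Main obstacle.} The main difficulty is rigor in Steps 2--3. The kernel $1/\|\mathbf{s}\|$ is not integrable at infinity, so $\mathcal{F}\{g\}$ exists only as a distributional or Abel limit. The Weyl identity at $s_z=0$ is likewise an oscillatory integral with an integrable singularity at $\|\boldsymbol{\kappa}\|=\kappa_0$.

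The first route I would try is to insert a damping factor: replace $\kappa_0$ by $\kappa_0+\jmath\epsilon$, compute the absolutely convergent transforms, and let $\epsilon\to0$.

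As an alternative check, I would compute the transform directly. Because $\phi$ is radial, $\mathcal{F}\{\phi\}$ reduces to a Hankel transform:
\[
\mathcal{F}\{\phi\}(\boldsymbol{\kappa})=\frac{1}{2}\int_0^\infty \sin(\kappa_0 r)\,J_0(\|\boldsymbol{\kappa}\| r)\,dr.
\]
The standard integral $\int_0^\infty \sin(ar)J_0(br)\,dr$ equals $1/\sqrt{a^2-b^2}$ for $b<a$ and $0$ for $b>a$. This yields \eqref{phi_FT} directly and confirms the branch choices used above.
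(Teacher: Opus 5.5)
Your proposal is correct and follows essentially the same route as the paper: split $\sin(\kappa_0\|\mathbf{s}\|)$ via Euler's formula, obtain $\mathcal{F}\{e^{\jmath\kappa_0\|\mathbf{s}\|}/\|\mathbf{s}\|\}$ from the Weyl identity at $s_z=0$, and handle the incoming-wave term with the conjugation property. Your explicit branch choice $\Im\{\kappa_z\}\ge 0$ is a small improvement over the paper: it makes the cancellation for $\|\boldsymbol{\kappa}\|>\kappa_0$ explicit, whereas the paper's expression for the conjugate term is literally valid only in the propagating region and it simply asserts the zero outside it.
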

\begin{IEEEproof}
    Please refer to Appendix \ref{app_FT_prop}.
\end{IEEEproof}

By substituting the Fourier transform $\mathcal{F}\{\phi(\mathbf{s})\}$ in (\ref{phi_FT}) into (\ref{crad_FT}), ${C}_\mathrm{rad}(\boldsymbol{\kappa})$ is obtained as \begin{align}
    {C}_\mathrm{rad}(\boldsymbol{\kappa})
    &= \begin{cases}
        \frac{  Z_0 ( 1- {\kappa_y^2}/{\kappa_0^2})}{2\sqrt{1-\|\boldsymbol{\kappa}\|^2/\kappa_0^2}},  & \|\boldsymbol{\kappa}\| \leq \kappa_0,\\
        0,  & \|\boldsymbol{\kappa}\| \geq \kappa_0.
    \end{cases}
\end{align}
Therefore, by applying the inverse transform in (\ref{FT}{b}), ${c}_\mathrm{rad}(\mathbf{s})$ can be rewritten as
\begin{align} \label{crad_IFT}
    {c}_\mathrm{rad}(\mathbf{s})&=\frac{1}{(2\pi)^2}\iint_{\|\boldsymbol{\kappa}\| \leq \kappa_0} {C}_\mathrm{rad}(\boldsymbol{\kappa})e^{\jmath \boldsymbol{\kappa}^\mathrm{T}\mathbf{s}}d\boldsymbol{\kappa} \notag \\
    &=\frac{1}{(2\pi)^2} \! \int_{-\kappa_0}^{\kappa_0}\! \int_{-\sqrt{\kappa_0^2-\kappa_x^2}}^{\sqrt{\kappa_0^2-\kappa_x^2}} \! {C}_\mathrm{rad}(\boldsymbol{\kappa})e^{\jmath \boldsymbol{\kappa}^\mathrm{T}\mathbf{s}}d {\kappa_y} d {\kappa_x}.
\end{align}

To obtain a tractable expression of ${c}_\mathrm{rad}(\mathbf{s})$, Gauss-Legendre quadrature is employed to approximate the continuous integral in (\ref{crad_IFT}). Specifically, the Gauss-Legendre quadrature states that the integral of a function $g(x)$ over the interval $(a,b)$ can be approximated by a weighted sum of multiple Gauss-Legendre polynomial terms, i.e., 
\begin{align} \label{GL}
    \int_{a}^{b} g(x) dx\approx \frac{b-a}{2} \sum_{m=1}^{M} \omega_m g\left(\frac{b-a}{2} \theta_m + \frac{a+b}{2}\right),
\end{align}
where $M$ is the order of approximation, $\theta_m$ denote the root of the $m$-th Gauss-Legendre polynomial, and $\omega_m$ represents the corresponding weights. As $M$ increases, the Gauss-Legendre quadrature converges geometrically, where a moderate $M$ can be typically sufficient to yield  near-exact numerical integration results.   

Based on (\ref{GL}) and (\ref{crad_IFT}), for any $m,m' \in \{1,\cdots,M\}$, we denote $\left(\kappa_m^{(x)}, W_m^{(x)}\right)$ and $\left(\kappa_{m,m'}^{(y)},W_{m,m'}^{(y)}\right)$ as the
Gauss-Legendre coefficient pairs for the integrals, given by 
\begin{subequations}
\begin{align}
    \kappa_m^{(x)} = \kappa_0 \theta_m, ~~~ \kappa_{m,m'}^{(y)} = \sqrt{\kappa_0^2-\left(\kappa_{m}^{(x)}\right)^2}\theta_{m'}, \\
    W_m^{(x)}=\kappa_0 \omega_m,~~~ W_{m,m'}^{(y)} = \sqrt{\kappa_0^2-\left(\kappa_{m}^{(x)}\right)^2}\omega_{m'}.
\end{align}    
\end{subequations}
Furthermore, we define
\begin{subequations}
\begin{align}
    \widetilde{\rho}_{m,m'} &= \frac{W_m^{(x)}W_{m,m'}^{(y)}}{(2\pi)^2}{C}_\mathrm{rad}(\widetilde{\boldsymbol{\kappa}}_{m,m'}), \\
    \widetilde{\boldsymbol{\kappa}}_{m,m'} &= \left[\kappa_m^{(x)}, \kappa_{m,m'}^{(y)},0 \right]^\mathrm{T},
\end{align}    
\end{subequations}
so that ${c}_\mathrm{rad}(\mathbf{s})$ in (\ref{crad_IFT}) can be approximated and expressed in the following simplified form: 
\begin{align} \label{crad_approx}
    {c}_\mathrm{rad}(\mathbf{s}) &\approx \sum_{m=1}^{M} \sum_{m'=1}^{M} \widetilde{\rho}_{m,m'} e^{\jmath \widetilde{\boldsymbol{\kappa}}_{m,m'}^\mathrm{T}\mathbf{s}} \overset{(\text{a})}{=} \sum_{i=1}^{I}  {\rho}_{i} e^{\jmath {\boldsymbol{\kappa}}_{i}^\mathrm{T}\mathbf{s}}.
\end{align}
In step (a), we re-index the weights and wavenumber vectors to further streamline the expression, with $I = M^2$. Given the approximation of the radiated kernel in (\ref{crad_approx}), the overall coupling kernel can be expressed as 
\begin{align} \label{ct_approx}
    {c}_\mathrm{T}(\mathbf{s}-\mathbf{z}) 
    &= Z_s \delta(\mathbf{s}-\mathbf{z}) + \sum_{i=1}^{I}  {\rho}_{i} e^{\jmath {\boldsymbol{\kappa}}_{i}^\mathrm{T}\mathbf{s}}e^{-\jmath {\boldsymbol{\kappa}}_{i}^\mathrm{T}\mathbf{z}}.
\end{align}
With the approximated expression of ${c}_\mathrm{T}(\mathbf{s}-\mathbf{z})$ in (\ref{ct_approx}), the inverse of ${c}_\mathrm{T}(\mathbf{s}-\mathbf{z})$ becomes tractable, which is given in the  following proposition.
\begin{proposition} \label{inverse_prop}
    \normalfont
    \emph{(Inverse of Coupling Kernel)} The inverse of ${c}_\mathrm{T}(\mathbf{s}-\mathbf{z})$ can be given by 
    \begin{align} \label{inverse_ct}
    \!{c}_\mathrm{T}^{-1}(\mathbf{s}'\!-\!\mathbf{s})\!=\! \frac{1}{Z_s}\delta(\mathbf{s}'\!-\!\mathbf{s}) \!-\! \!\sum_{i\!=\!1}^{I} \sum_{i'\!=\!1}^{I} \!\frac{{\rho}_{i'} d_{i,i'}}{Z_s^2} e^{\jmath {\boldsymbol{\kappa}}_{i}^\mathrm{T}\mathbf{s}'}e^{-\jmath {\boldsymbol{\kappa}}_{i'}^\mathrm{T}\mathbf{s}},
\end{align}
where $d_{i,i'}$ is the ($i$, $i'$)-th entry of the matrix $\mathbf{D}=(\mathbf{I}+\boldsymbol{\Lambda}\mathbf{Q})^{-1}$. Here, $\boldsymbol{\Lambda}$ is a diagonal matrix, which is given by 
\begin{align}
    \boldsymbol{\Lambda}=\mathrm{diag}\left\{\frac{\rho_1}{Z_s},\cdots, \frac{\rho_I}{Z_s}\right\},
\end{align}
while the entries in $\mathbf{Q}$ are calculated by
\begin{align}    \left[\mathbf{Q}\right]_{i,i'}&=\int_{\mathcal{S}}e^{\jmath  {\boldsymbol{\kappa}}_{i'}^\mathrm{T}\mathbf{s}}e^{-\jmath  {\boldsymbol{\kappa}}_{i}^\mathrm{T}\mathbf{s}} d\mathbf{s} \notag \\
    &=\int_{-\frac{L_x}{2}}^{\frac{L_x}{2}}e^{-\jmath  \Delta \kappa_{i,i'}^{(x)} {s}_x} d s_x \int_{-\frac{L_y}{2}}^{\frac{L_y}{2}} e^{-\jmath  \Delta \kappa_{i,i'}^{(y)} {s}_y} d s_y \notag \\
    &=L_x L_y \mathrm{sinc}\left(\frac{L_x \Delta \kappa_{i,i'}^{(x)}}{2}\right) \mathrm{sinc}\left(\frac{L_y \Delta \kappa_{i,i'}^{(y)}}{2}\right),
\end{align}
where we denote $\Delta \kappa_{i,i'}^{(x)}$ and $\Delta \kappa_{i,i'}^{(y)}$ as the $x$ and $y$ components of $({\boldsymbol{\kappa}}_{i}-{\boldsymbol{\kappa}}_{i'} )$, respectively.
\end{proposition}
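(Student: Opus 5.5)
Since the approximated kernel in (\ref{ct_approx}) is a scaled identity $Z_s\delta(\mathbf{s}-\mathbf{z})$ plus a rank-$I$ separable (degenerate) perturbation, the plan is to mimic the Woodbury identity for operators. We posit an inverse of the form
\begin{align*}
c_{\mathrm{T}}^{-1}(\mathbf{s}'-\mathbf{s}) = \frac{1}{Z_s}\delta(\mathbf{s}'-\mathbf{s}) - \sum_{i=1}^{I}\sum_{i'=1}^{I} a_{i,i'} e^{\jmath \boldsymbol{\kappa}_i^{\mathrm{T}}\mathbf{s}'} e^{-\jmath \boldsymbol{\kappa}_{i'}^{\mathrm{T}}\mathbf{s}},
\end{align*}
with unknown coefficients $a_{i,i'}$. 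This is the natural ansatz: the range and co-range of the perturbation are spanned by $\{e^{\jmath\boldsymbol{\kappa}_i^{\mathrm{T}}\mathbf{s}}\}$ and $\{e^{-\jmath\boldsymbol{\kappa}_{i}^{\mathrm{T}}\mathbf{z}}\}$, respectively.

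We then substitute this ansatz and (\ref{ct_approx}) into the defining identity $\int_{\mathcal{S}_{\mathrm{T}}} c_{\mathrm{T}}^{-1}(\mathbf{s}'-\mathbf{s}) c_{\mathrm{T}}(\mathbf{s}-\mathbf{z}) d\mathbf{s} = \delta(\mathbf{s}'-\mathbf{z})$ and expand it into four terms. The delta--delta term gives exactly $\delta(\mathbf{s}'-\mathbf{z})$. The two delta--exponential cross terms are evaluated by the sifting property and give $\sum_i (\rho_i/Z_s) e^{\jmath\boldsymbol{\kappa}_i^{\mathrm{T}}\mathbf{s}'}e^{-\jmath\boldsymbol{\kappa}_i^{\mathrm{T}}\mathbf{z}}$ and $-Z_s\sum_{i,i'} a_{i,i'} e^{\jmath\boldsymbol{\kappa}_i^{\mathrm{T}}\mathbf{s}'}e^{-\jmath\boldsymbol{\kappa}_{i'}^{\mathrm{T}}\mathbf{z}}$. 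The exponential--exponential term produces the aperture integral $\int_{\mathcal{S}} e^{-\jmath\boldsymbol{\kappa}_{i'}^{\mathrm{T}}\mathbf{s}}e^{\jmath\boldsymbol{\kappa}_{j}^{\mathrm{T}}\mathbf{s}}d\mathbf{s} = [\mathbf{Q}]_{i',j}$, which yields $-\sum_{i,j}\big(\sum_{i'} a_{i,i'}[\mathbf{Q}]_{i',j}\rho_j\big) e^{\jmath\boldsymbol{\kappa}_i^{\mathrm{T}}\mathbf{s}'}e^{-\jmath\boldsymbol{\kappa}_j^{\mathrm{T}}\mathbf{z}}$. The closed form of $\mathbf{Q}$ as a product of sinc functions follows directly by separating the rectangular aperture integral into its $s_x$ and $s_y$ factors.

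Next we require every coefficient of $e^{\jmath\boldsymbol{\kappa}_i^{\mathrm{T}}\mathbf{s}'}e^{-\jmath\boldsymbol{\kappa}_j^{\mathrm{T}}\mathbf{z}}$ to vanish. This is sufficient, and it is also necessary whenever the wavevectors are distinct, which holds since the Gauss--Legendre nodes are distinct. Writing $\mathbf{A}=[a_{i,j}]$ and $\mathbf{P}=\mathrm{diag}(\rho_1,\dots,\rho_I)=Z_s\boldsymbol{\Lambda}$, the condition reads $\mathbf{P}/Z_s - Z_s\mathbf{A} - \mathbf{A}\mathbf{Q}\mathbf{P}=\mathbf{0}$. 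Equivalently, $Z_s\mathbf{A}(\mathbf{I}+\mathbf{Q}\boldsymbol{\Lambda}) = \boldsymbol{\Lambda}$, so that $\mathbf{A} = Z_s^{-1}\boldsymbol{\Lambda}(\mathbf{I}+\mathbf{Q}\boldsymbol{\Lambda})^{-1}$. The push-through identity $\boldsymbol{\Lambda}(\mathbf{I}+\mathbf{Q}\boldsymbol{\Lambda})^{-1} = (\mathbf{I}+\boldsymbol{\Lambda}\mathbf{Q})^{-1}\boldsymbol{\Lambda}$ then gives $\mathbf{A} = Z_s^{-1}\mathbf{D}\boldsymbol{\Lambda}$. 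Hence $a_{i,i'} = d_{i,i'}\rho_{i'}/Z_s^2$, which matches (\ref{inverse_ct}).

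The main obstacle is justifying that $\mathbf{I}+\boldsymbol{\Lambda}\mathbf{Q}$ is invertible. $\mathbf{Q}$ is a Gram matrix of exponentials on $\mathcal{S}_{\mathrm{T}}$ and is therefore Hermitian positive semidefinite. The weights $\rho_i$ are positive because $C_{\mathrm{rad}}\ge 0$ inside the disk and the Gauss--Legendre weights are positive. Hence $\mathbf{I}+\boldsymbol{\Lambda}\mathbf{Q}$ is similar to $\mathbf{I}+\boldsymbol{\Lambda}^{1/2}\mathbf{Q}\boldsymbol{\Lambda}^{1/2}$, which is positive definite. Finally, we would remark that the same computation with the factors in the reverse order shows this expression is also a right inverse.
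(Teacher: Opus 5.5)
Your derivation is correct and follows essentially the argument the paper delegates to Appendix B of \cite{Optimal_CAPA}: you substitute a Woodbury-type degenerate-kernel ansatz into the defining identity and match coefficients, which reduces the problem to an $I\times I$ linear system; with your push-through step $\boldsymbol{\Lambda}(\mathbf{I}+\mathbf{Q}\boldsymbol{\Lambda})^{-1}=\mathbf{D}\boldsymbol{\Lambda}$, its solution gives exactly $a_{i,i'}=\rho_{i'}d_{i,i'}/Z_s^2$. Your proof that $\mathbf{I}+\boldsymbol{\Lambda}\mathbf{Q}$ is invertible is a worthwhile addition the paper never supplies; it uses that $\mathbf{Q}\succeq\mathbf{0}$ is a Gram matrix and that $\rho_i>0$ at the interior Gauss--Legendre nodes.
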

\begin{IEEEproof}
    The derivation details are similar to the Appendix B in \cite{Optimal_CAPA}, which are thus omitted here for conciseness.
\end{IEEEproof}

Furthermore, by substituting ${c}_\mathrm{T}^{-1}(\mathbf{s}'-\mathbf{s})$ in (\ref{inverse_ct}) to (\ref{g_G_tilde}{a}) and (\ref{g_G_tilde}{b}), $\widetilde{\mathbf{g}}(\mathbf{s}')$ and $\widetilde{\mathbf{G}} $ can be simplified as follows. Specifically, $\widetilde{\mathbf{g}}(\mathbf{s}')$ can be rewritten as 
\begin{align} \label{g_tilde}
    \widetilde{\mathbf{g}}(\mathbf{s}') 
    &= \frac{1}{Z_s}\int_{\mathcal{S}_{\mathrm{T}}}\delta(\mathbf{s}'-\mathbf{s}) \mathbf{g}(\mathbf{s}) d\mathbf{s} \notag\\
    &~~- \sum_{i=1}^{I} \sum_{i'=1}^{I} \frac{{\rho}_{i'} d_{i,i'}}{Z_s^2} \int_{\mathcal{S}_{\mathrm{T}}}e^{\jmath {\boldsymbol{\kappa}}_{i}^\mathrm{T}\mathbf{s}'}e^{-\jmath {\boldsymbol{\kappa}}_{i'}^\mathrm{T}\mathbf{s}}\mathbf{g}(\mathbf{s})d\mathbf{s} \notag\\
    &=  \!\frac{1}{Z_s} \mathbf{g}(\mathbf{s}')  \!-\! \!\sum_{i=1}^{I} \sum_{i'=1}^{I} \frac{{\rho}_{i'} d_{i,i'}}{Z_s^2} \!e^{\jmath {\boldsymbol{\kappa}}_{i}^\mathrm{T}\mathbf{s}'}\!\!\!\int_{\mathcal{S}_{\mathrm{T}}}\!\!\!e^{-\jmath {\boldsymbol{\kappa}}_{i'}^\mathrm{T}\mathbf{s}}\mathbf{g}(\mathbf{s})d\mathbf{s}\notag\\
     &=  \frac{1}{Z_s} \mathbf{g}(\mathbf{s}')  - \frac{1}{Z_s} \sum_{i=1}^{I}   e^{\jmath {\boldsymbol{\kappa}}_{i}^\mathrm{T}\mathbf{s}'} \mathbf{b}_i.
\end{align}
Here, $\mathbf{b}_i$ is the $i$-th row of $\mathbf{B}=\mathbf{D}\boldsymbol{\Lambda} \mathbf{A}$, where the $i'$-th row of  $\mathbf{A}$ can be calculated by
\begin{align} \label{a}
    \mathbf{a}_{i'}=\int_{\mathcal{S}_{\mathrm{T}}}e^{-\jmath {\boldsymbol{\kappa}}_{i'}^\mathrm{T}\mathbf{s}}\mathbf{g}(\mathbf{s})d\mathbf{s}. 
\end{align}
By denoting $\mathbf{G}=\int_{\mathcal{S}_{\mathrm{T}}} \mathbf{g}^\mathrm{H}(\mathbf{s}) \mathbf{g}(\mathbf{s}) d\mathbf{s}$,  $\widetilde{\mathbf{G}} $ can be obtained by 
\begin{align} \label{G_tilde}
    \widetilde{\mathbf{G}}    
    &=\frac{1}{Z_s}\int_{\mathcal{S}_{\mathrm{T}}} \mathbf{g}^\mathrm{H}(\mathbf{s}) \mathbf{g}(\mathbf{s}) d\mathbf{s}\notag\\
    &~~ -\!\sum_{i=1}^{I} \sum_{i'=1}^{I}\!\frac{{\rho}_{i'} d_{i,i'}}{Z_s^2}\!\!\int_{\mathcal{S}_{\mathrm{T}}}\!\!\mathbf{g}^\mathrm{H}(\mathbf{s}')  e^{\jmath {\boldsymbol{\kappa}}_{i}^\mathrm{T}\mathbf{s}'} d\mathbf{s}'\!\! \int_{\mathcal{S}_{\mathrm{T}}}\!\!\!e^{-\jmath {\boldsymbol{\kappa}}_{i'}^\mathrm{T}\mathbf{s}} \mathbf{g}(\mathbf{s}) d\mathbf{s} \notag \\
    &= \frac{1}{Z_s}\mathbf{G}-\frac{1}{Z_s} \mathbf{A}^\mathrm{H}\mathbf{D}\boldsymbol{\Lambda} \mathbf{A}.
\end{align}

Building upon the aforementioned results, the optimal ${\mathbf{w}}(\mathbf{s})$ can be obtained by iteratively updating ${v}_k$, $\mu_k$ and ${\mathbf{w}}(\mathbf{s})$, where the overall algorithm for solving problem (\ref{problem_1}) is summarized in Algorithm \ref{algo_KA_WMMSE}. 

\begin{algorithm}[!t] 
\caption{Proposed KA-based WMMSE Algorithm} 
    \begin{algorithmic}[1]  \label{algo_KA_WMMSE}
        \REQUIRE ~  CAPA system parameters;
        \STATE Initialize $\mathbf{w}(\mathbf{s})$;
         \REPEAT  
         \STATE Calculate $\mathbf{e}_k$ and $\widetilde{\sigma}_k^2$ to update ${v}_k^{\mathrm{MMSE}}$ using (\ref{MMSE_receiver}) and $\varepsilon_k^\mathrm{MMSE}$ using (\ref{closed_form});
         \STATE Construct $\mathbf{U}$ with $\varepsilon_k^\mathrm{MMSE}$ and calculate $\beta$ by (\ref{g_U_beta}{c});
         \STATE Obtain $\boldsymbol{\Lambda}$ and $\mathbf{D}$ in the Proposition \ref{inverse_prop}, and calculate $\widetilde{\mathbf{g}}(\mathbf{s}')$  and $\widetilde{\mathbf{G}}$ using (\ref{g_tilde}) and (\ref{G_tilde}), respectively. 
        \STATE Update $\mathbf{w}(\mathbf{s})$ through Proposition \ref{closed_form_prop};
        \UNTIL convergence;
        \STATE Scale $\mathbf{w}(\mathbf{s})$ by  (\ref{scale_w})
    \end{algorithmic}
\end{algorithm}
 
\subsection{Matrix Implementation}
In Algorithm \ref{algo_KA_WMMSE},  iterative computations of integrals to update ${v}_k$, $\mathbf{U}$ and $\mathbf{w}(\mathbf{s})$ are required, posing a significant computational burden for practical implementations.  To bypass these intensive integral computations and facilitate the practical implementations, we leverage Gauss-Legendre quadrature presented in (\ref{GL}) to develop a matrix-based implementation scheme for Algorithm \ref{algo_KA_WMMSE}.  

First, we focus on ${v}_k$ in  (\ref{MMSE_receiver}) and $\varepsilon_k^\mathrm{MMSE}$ in (\ref{MMSE}), where the calculation of $\mathbf{e}_k$ and $\widetilde{\sigma}_k$ involves integral evaluations. By applying the Gauss-Legendre quadrature, the $k$-th entry of $\mathbf{e}_k$ can be calculated by 
\begin{align}
    e_{k,k} &= \int_{\mathcal{S}_{\mathrm{T}}} h_k(\mathbf{s})  {w}_k(\mathbf{s}) d \mathbf{s} \notag \\
    &\approx \sum_{m=1}^{M} \sum_{m'=1}^{M}  \frac{\omega_m \omega_{m'} A_\mathrm{T}}{4} h_k(\mathbf{s}_{m,m'}) {w}_k(\mathbf{s}_{m,m'}) \notag \\
    &= \mathbf{h}_k \boldsymbol{\Phi}_\mathrm{T} {\mathbf{w}}_k,
\end{align}
where we define
 \begin{align}
    {\mathbf{w}}_k &= \left[{{w}}_k(\mathbf{s}_{1,1}),\cdots,  {{w}}_k(\mathbf{s}_{M,M})\right]^\mathrm{T} \in \mathbb{C}^{I\times 1}, \\
    \mathbf{h}_k &= \left[h_k(\mathbf{s}_{1,1}),\cdots, h_k(\mathbf{s}_{M,M})\right]  \in \mathbb{C}^{1 \times I}, \\
    \boldsymbol{\Phi}_\mathrm{T}&=\frac{1}{4} A_\mathrm{T} \mathrm{diag}\left\{ \omega_1\omega_1,\cdots,\omega_M\omega_M\right\}\in \mathbb{C}^{I \times I}.
\end{align}   
Accordingly, $\mathbf{e}_k$ can be given by
\begin{align}       \label{ek}
    \mathbf{e}_k 
    &= \mathbf{h}_k \boldsymbol{\Phi}_\mathrm{T}  {\mathbf{W}},
\end{align}
where ${\mathbf{W}}$ is the matrix form of ${\mathbf{w}}(\mathbf{s}')$, given by
\begin{align}
    {\mathbf{W}}&= \left[ {\mathbf{w}}^\mathrm{T}(\mathbf{s}_{1,1}),\cdots,  {\mathbf{w}}^\mathrm{T}(\mathbf{s}_{M,M})\right]^\mathrm{T} \in \mathbb{C}^{I \times K}.
\end{align}
To calculate $\widetilde{\sigma}_k$, we first substitute (\ref{ct_approx}) to (\ref{expression_SINR_sigma}{b}), which yields
\begin{align}
    \widetilde{\sigma}_k  =&\frac{\sigma_k^2}{2 P_\mathrm{T}}Z_s\int_{\mathcal{S}_{\mathrm{T}}}\left\|{\mathbf{w}}(\mathbf{s})\right\|^2d \mathbf{s} \notag \\
  &+ \frac{\sigma_k^2}{2 P_\mathrm{T}}\!\int_{\mathcal{S}_{\mathrm{T}}} \!\int_{\mathcal{S}_{\mathrm{T}}} \!\! \!{\mathbf{w}}(\mathbf{s}) \!\left( \sum_{i=1}^{I}  {\rho}_{i} e^{\jmath {\boldsymbol{\kappa}}_{i}^\mathrm{T}\mathbf{s}}e^{-\jmath {\boldsymbol{\kappa}}_{i}^\mathrm{T}\mathbf{z}} \right) \!{\mathbf{w}}^\mathrm{H}(\mathbf{z}) d \mathbf{z} d \mathbf{s}.
\end{align}
By employing the Gauss-Legendre quadrature,  $\widetilde{\sigma}_k$ can then be calculated by  
\begin{align} \label{noise_matrix}
     \!\!\widetilde{\sigma}_k \! =&\frac{\sigma_k^2 Z_s}{2 P_\mathrm{T}}\!\left[\mathrm{tr}\!\left({\mathbf{W}}^\mathrm{H}\boldsymbol{\Phi}_\mathrm{T} {\mathbf{W}}\right) \!+\!\mathrm{tr}\!\left({\mathbf{W}}^\mathrm{H}\boldsymbol{\Phi}_\mathrm{T}  \mathbf{X}^\mathrm{H}\boldsymbol{\Lambda} \mathbf{X}\boldsymbol{\Phi}_\mathrm{T} {\mathbf{W}}\right)\right]\!,
\end{align}
with
\begin{align}
\mathbf{X}&= \left[\mathbf{x}_{1}^\mathrm{T},\cdots,\mathbf{x}_{I}^\mathrm{T}\right]^\mathrm{T}\in \mathbb{C}^{I \times I},\\
    \mathbf{x}_{i} &= \left[e^{-\jmath {\boldsymbol{\kappa}}_{i}^\mathrm{T}\mathbf{s}_{1,1}},\cdots, e^{-\jmath {\boldsymbol{\kappa}}_{i}^\mathrm{T}\mathbf{s}_{M,M}}\right]  \in \mathbb{C}^{1 \times I}.  
\end{align}
Once $\mathbf{e}_k$ and $\widetilde{\sigma}_k$ are obtained, ${v}_k$ and $\varepsilon_k^\mathrm{MMSE}$ are calculated via (\ref{MMSE_receiver}) and  (\ref{MMSE}), respectively. Subsequently, the MSE weights  $\mu_k=(\varepsilon_k^\mathrm{MMSE})^{-1}$ and the weight matrix $\mathbf{U}=\mathrm{diag}(\mu_1,\cdots,\mu_K)$ can be directly updated. With the given ${v}_k$ and $\mu_k$, the coefficient $\beta$ in (\ref{g_U_beta}{c}) can also be explicitly determined as follows:
\begin{align}
    \frac{1}{\beta}&=\sum_{k=1}^{K} \mu_k \frac{\sigma_k^2}{2 P_\mathrm{T}}|{v}_k|^2 = \frac{1}{2 P_\mathrm{T}} \mathrm{tr}(\mathbf{V}^\mathrm{H} \mathbf{U}\boldsymbol{\Sigma}\mathbf{V}),
\end{align}
where
\begin{align}
    \mathbf{V} &= \mathrm{diag}(v_1,\cdots,v_K),~~\boldsymbol{\Sigma} = \mathrm{diag}(\sigma_1^2,\cdots,\sigma_K^2).
\end{align}

In the subsequence, the optimal beamformer $\mathbf{W}$ is updated as follows. To compute ${\mathbf{W}}$ in \eqref{closed_form}, the matrix-based representations of $ \widetilde{\mathbf{g}}(\mathbf{s}')$ and $ \widetilde{\mathbf{G}}$  are required, where $\mathbf{G}$ and $\mathbf{A}$ involve integral evaluations. To this end, we first define 
\begin{align}
    \mathbf{H}&=\left[\mathbf{h}_1^\mathrm{T},\cdots, \mathbf{h}_K^\mathrm{T}\right]^\mathrm{T} \in \mathbb{C}^{K \times I},
\end{align}
based on which the matrix form of $\mathbf{g}(\mathbf{s})$ can be given by
\begin{align} \label{g_matrix}
     \!\!\!\!\!\begin{bmatrix}
        \!\mathbf{g}(\mathbf{s}_{1,1})\!\\
        \vdots\\
        \mathbf{g}(\mathbf{s}_{M,M})\!
    \end{bmatrix}\!\!=\!\!\begin{bmatrix}
      \!\! {v}_1 h_1^*  ( \mathbf{s}_{1,1}) &\!\!\!\cdots&\!\!\!\!{v}_K h_K^*  ( \mathbf{s}_{1,1})\! \\
        &\!\!\!\ddots\!&\\
       \! {v}_1 h_1^*  ( \mathbf{s}_{M,M}) &\!\!\!\cdots\!&\!\!{v}_K h_K^*  ( \mathbf{s}_{M,M})\!
    \end{bmatrix}\!\!=\!\mathbf{H}^\mathrm{H} \mathbf{V}.
\end{align}
Given the expression of (\ref{g_matrix}), $\mathbf{G}$ can be approximated as follows:
\begin{align}
    \mathbf{G}& = \int_{\mathcal{S}_{\mathrm{T}}} \!\! \mathbf{g}^\mathrm{H}(\mathbf{s}) \mathbf{g}(\mathbf{s}) d\mathbf{s} \notag \\
    & \approx \sum_{m=1}^{M} \!\sum_{m'=1}^{M}\!\! \frac{\omega_m \omega_{m'}A_\mathrm{T}}{4}\mathbf{g}^\mathrm{H}(\mathbf{s}_{m,m'}) \mathbf{g}(\mathbf{s}_{m,m'}) \notag \\
    &= \mathbf{V}^\mathrm{H}  \mathbf{H} \boldsymbol{\Phi}_\mathrm{T}\mathbf{H}^\mathrm{H} \mathbf{V}.
\end{align}
To compute $\mathbf{A}$, we should be obtained $\mathbf{a}_{i'}$ in (\ref{a}) first, which can be compactly approximated by
\begin{align}
    \mathbf{a}_{i'}&=\int_{\mathcal{S}_{\mathrm{T}}}e^{-\jmath {\boldsymbol{\kappa}}_{i'}^\mathrm{T}\mathbf{s}}\mathbf{g}(\mathbf{s})d\mathbf{s}  \notag\\
    &\approx \sum_{m=1}^{M} \sum_{m'=1}^{M} \frac{\omega_m \omega_{m'}A_\mathrm{T}}{4}e^{-\jmath {\boldsymbol{\kappa}}_{i'}^\mathrm{T}\mathbf{s}_{m,m'}}\mathbf{g}(\mathbf{s}_{m,m'})\notag \\
    &=\mathbf{x}_{i'}\boldsymbol{\Phi}_\mathrm{T}\mathbf{H}^\mathrm{H} \mathbf{V}.
\end{align}
Accordingly, $\mathbf{A}$ incorporates $\mathbf{a}_{i'}$ in $i$-th row, thus $\mathbf{A}$ can be rewritten as 
\begin{align}
    \mathbf{A} = \mathbf{X}\boldsymbol{\Phi}_\mathrm{T}\mathbf{H}^\mathrm{H} \mathbf{V}.
\end{align} 

When $\mathbf{G}$ and $\mathbf{A}$ are computed, $ \widetilde{\mathbf{G}}$ can be obtained by (\ref{G_tilde}). Besides, $\mathbf{B}$ can be given by $\mathbf{B}=\mathbf{D}\boldsymbol{\Lambda} \mathbf{A}=\mathbf{D}\boldsymbol{\Lambda} \mathbf{X}\boldsymbol{\Phi}_\mathrm{T}\mathbf{H}^\mathrm{H} \mathbf{V}$.
With the calculated $\mathbf{B}$, the matrix form of $\widetilde{\mathbf{g}}(\mathbf{s}')$ can be expressed as
\begin{align}
    \!\!{\boldsymbol{\Gamma}} \!=\! \left[
        \widetilde{\mathbf{g}}^\mathrm{T}(\mathbf{s}'_{1,1}),
        \cdots,
        \widetilde{\mathbf{g}}^\mathrm{T}(\mathbf{s}'_{M,M})
    \right]^\mathrm{T}  = \frac{1}{Z_s} \mathbf{H}^\mathrm{H} \mathbf{V}\!-\!\frac{1}{Z_s}  \mathbf{X}^\mathrm{H} \mathbf{B}.
\end{align}
Based on the above results,  $ {\mathbf{W}}$ can be updated by 
\begin{align} \label{optimal_W}
    {\mathbf{W}} = {\boldsymbol{\Gamma}} \mathbf{U} \left(\frac{1}{\beta}\mathbf{I}+\widetilde{\mathbf{G}} \mathbf{U}\right)^{-1}.
\end{align}
To facilitate the matrix-based implementation, each step in Algorithm \ref{algo_KA_WMMSE} that originally requires integral evaluations can be replaced by the corresponding matrix forms derived above. This transformation effectively converts the continuous optimization into a sequence of computationally efficient linear algebraic operations.

\subsection{Convergence and Computational Complexity Analysis}
\subsubsection{Convergence}
While the proposed algorithm is based on the WMMSE framework, its convergence proof can be treated as a specific case of the results in \cite{WMMSE}. Specifically, the process of our algorithm is to minimize the following function w.r.t $v_k$, $\mu_k$, and $\mathbf w(\mathbf{s})$, i.e.,
\begin{align} \label{augment}
\mathcal{L}({\mathbf w}(\mathbf s),v_k,\mu_k)
= \sum_{k=1}^K \left(\mu_k \varepsilon_k - \log \mu_k\right).
\end{align}
As shown in \cite{WMMSE}, minimizing $\mathcal{L}$ is equivalent to maximizing the sum-rate when $v_k=v_k^{\mathrm{MMSE}}$ in (\ref{MMSE_receiver}) and $\mu_k=(\varepsilon_k^{\mathrm{MMSE}})^{-1}$ in (\ref{MMSE}) are selected.  To minimize $\mathcal{L}({\mathbf w}(\mathbf s),v_k,\mu_k)$, the variables of ${\mathbf w}(\mathbf s),v_k,\mu_k$ are updated iteratively while keeping the others fixed. Given ${\mathbf w}(\mathbf s)$ and $\mu_k$, updating $v_k$ in $\mathcal{L} $ reduces to $\min \mathcal{L}( v_k)=\varepsilon_k$, where the solution is given by (\ref{MMSE_receiver}) and guarantees $\mathcal{L}({\mathbf w}^i(\mathbf s),v_k^{(i+1)},\mu_k^i)\leq \mathcal{L}({\mathbf w}^i(\mathbf s),v_k^i,\mu_k^i)$. Similarly, for fixed ${\mathbf w}(\mathbf s)$ and $v_k$, the optimal weight $\mu_k=(\varepsilon_k^{\mathrm{MMSE}})^{-1}$ minimizes $\mathcal{L}$, guaranteeing $\mathcal{L}({\mathbf w}^i(\mathbf s),v_k^{(i+1)},\mu_k^{(i+1)})\leq \mathcal{L}({\mathbf w}^i(\mathbf s),v_k^{(i+1)},\mu_k^i)$.  Finally, for fixed $v_k$ and $\mu_k$, (\ref{augment}) is simplified to the formulation in  (\ref{WMMSE_formulation_w}). The CoV approach employed in Proposition \ref{optimal_structure_prop} minimizes (\ref{WMMSE_formulation_w}) by optimizing ${\mathbf w}(\mathbf s)$, ensuring $\mathcal{L}({\mathbf w}^{(i+1)}(\mathbf s),v_k^{(i+1)},\mu_k^{(i+1)})\leq \mathcal{L}({\mathbf w}^i(\mathbf s),v_k^{(i+1)},\mu_k^{(i+1)})$. Consequently, as each iteration step non-increases the value of $\mathcal{L}$, the sequence $\{\mathcal{L}^{(i)}\}$ is guaranteed to be non-increasing. Moreover, $\mathcal{L}$ is lower-bounded under the finite transmit power constraint. Therefore, the convergence of Algorithm \ref{algo_KA_WMMSE} is guaranteed. 

\subsubsection{Computational Complexity}
The computational complexity of the proposed algorithm is analyzed as follows. For updating $v_k$ and $\mu_k$, the evaluations of $\mathbf{e}_k$ and $\widetilde{\sigma}_k^2$ via (\ref{ek}) and (\ref{noise_matrix}) incur a computational complexity of $\mathcal{O}(KI^2+K^2I)$. Given $v_k$ and $\mu_k$, the calculation of the beamformer $\mathbf{w}(\mathbf{s})$ in (\ref{optimal_W}) involves matrix multiplications and a matrix inversion.  Specifically, the intermediate variables ${\boldsymbol{\Gamma}}$, $\mathbf{U}$, $\beta$, $\widetilde{\mathbf{G}} $ are first computed with a total computational complexity of $\mathcal{O}(KI^2+K^2I+K^3+I^3)$. Besides, the matrix inversion requires a computational complexity of $\mathcal{O}(K^3)$. As a consequence, the overall computational complexity per iteration is dominated by $\mathcal{O}(KI^2+K^2I+K^3+I^3)$. 

\section{Extension to CAPA-to-CAPA MIMO System} \label{sec:extension}
In this section, we extend the proposed method to a CAPA-to-CAPA MIMO system. Specifically, the receiver is equipped with a CAPA $\mathcal{S}_\mathrm{R}$ with the lengths along the x- and y-axes being $\overline{L}_x$ and $\overline{L}_y$, respectively. Thus, the total aperture area of the CAPA receiver is thus given by $|\mathcal{S}_\mathrm{R}| = \overline{L}_x \cdot \overline{L}_y = A_\mathrm{R}$. Let $ {\mathbf{r}} = \left[ {r}_x,  {r}_y, 0\right]^\mathrm{T} \in \mathcal{S}_\mathrm{R}$ denote an arbitrary point on the CAPA receiver. In addition, $N$ independent data streams are assumed for signal transmission. The noisy electric field captured by the CAPA receiver can be given by 
\begin{align}
    y(\mathbf{r})&\!=\! \mathbf{u}_\mathrm{R}^\mathrm{T} \mathbf{e}_{\mathrm{rad}}(\mathbf{r}) + z(\mathbf{r})  \!= \!\int_{\mathcal{S}_{\mathrm{T}}}\!\! h(\mathbf{r}-\mathbf{s})   {j}_\mathrm{t}(\mathbf{s}) d\mathbf{s} \!+ \!z(\mathbf{r}),
\end{align}
where $z(\mathbf{r})$ represents the AWGN following a distribution of $\mathcal{CN}(0,\sigma_r^2)$. Furthermore, the channel response between  $\mathbf{s}$ and $\mathbf{r}$ retains the same form as (\ref{channel}),  which can be expressed as 
\begin{align}
    h(\mathbf{r}-\mathbf{s}) = -\jmath \kappa_0 Z_0 \left(g(\mathbf{r}-\mathbf{s})+\frac{1}{\kappa_0^2} \partial _y^2g(\mathbf{r}-\mathbf{s})\right).
\end{align}
Here, we assume that the CAPA receiver has a vertical polarization direction. Based on the above receiving model, the achievable rate of the CAPA-aided MIMO system can be formulated as 
\begin{align}
    R=\log \det \left( \mathbf{I}_N + \frac{1}{\sigma_r^2} \mathbf{Q} \right),
\end{align}
where we define $\mathbf{Q} = \int_{\mathcal{S}_{\mathrm{R}}} \mathbf{e}^{\mathrm{H}}_{\mathrm{rx}}(\mathbf{r}) \mathbf{e}_{\mathrm{rx}}(\mathbf{r}) d \mathbf{r}$ and $\mathbf{e}_{\mathrm{rx}}(\mathbf{r}) = \int_{\mathcal{S}_{\mathrm{T}}} h(\mathbf{r}-\mathbf{s}) \mathbf{w}(\mathbf{s}) d \mathbf{s}$. Considering the mutual coupling effects as the previous section, the  achievable rate maximization problem can be formulated as 
\begin{subequations} \label{problem_MIMO_1}
    \begin{align} 
        \max_{\mathbf{w}(\mathbf{s})} \quad &\log \det \left( \mathbf{I}_N + \frac{1}{\sigma_r^2} \mathbf{Q} \right) \\
        \mathrm{s.t.} \quad &  \frac{1}{2}\int_{\mathcal{S}_{\mathrm{T}}} \int_{\mathcal{S}_{\mathrm{T}}} \mathbf{\mathbf{w}(\mathbf{s}) }c_{\mathrm{T}} (\mathbf{s}-\mathbf{z}) \mathbf{w}^\mathrm{H}(\mathbf{z}) d \mathbf{z} d \mathbf{s} \le P_{\mathrm{T}}.
    \end{align}    
\end{subequations}
Consistent with Lemma \ref{equal_power_lemma}, the problem (\ref{problem_MIMO_1}) can  transformed into an equivalent unconstrained optimization problem: 
\begin{align} \label{problem_MIMO_2}
     \max_{ {\mathbf{w}}(\mathbf{s})} \quad & \widetilde{R}=\log \det \left( \mathbf{I}_N + \frac{1}{\widetilde{\sigma}_r^2} \mathbf{Q} \right),
\end{align}
where
\begin{align}
    \widetilde{\sigma}_r^2 = \frac{\sigma_r^2}{2 P_\mathrm{T}}\int_{\mathcal{S}_{\mathrm{T}}} \int_{\mathcal{S}_{\mathrm{T}}}  {\mathbf{w}}(\mathbf{s}) c_{\mathrm{T}} (\mathbf{s}-\mathbf{z}) {\mathbf{w}}^\mathrm{H}(\mathbf{z}) d \mathbf{z} d \mathbf{s}.
\end{align}
The equivalent receiving model is thus given by
\begin{align}
    \widetilde{y}(\mathbf{r})&=  \int_{\mathcal{S}_{\mathrm{T}}} h(\mathbf{r}-\mathbf{s})   {j}_\mathrm{t}(\mathbf{s}) d\mathbf{s} +  \widetilde{z}(\mathbf{r})
    =  \mathbf{e}_{\mathrm{rx}}(\mathbf{r}) \mathbf{c}+ \widetilde{ {z}}(\mathbf{r}),
\end{align}
where $\widetilde{ {z}}(\mathbf{r})\sim \mathcal{CN}(0,\widetilde{\sigma}_r^2)$ is the equivalent noise.

To solve (\ref{problem_MIMO_2}),  the WMMSE framework can be employed. In contrast to the multi-user case in Section \ref{sec:algorithm}, the symbol estimator is a set of continuous functions, i.e., $\mathbf{v}(\mathbf{r}) \in \mathbb{C}^{1\times N}$. Accordingly, the estimated data stream can be expressed as 
\begin{align}\label{c_estimate_MIMO}
    \hat{\mathbf{c}}&= \int_{\mathcal{S}_{\mathrm{R}}} \mathbf{v}^\mathrm{H}(\mathbf{r}) \widetilde{y}(\mathbf{r}) d\mathbf{r}  =\int_{\mathcal{S}_{\mathrm{R}}} \mathbf{v}^\mathrm{H}(\mathbf{r})  \mathbf{e}_{\mathrm{rx}} (\mathbf{r})d\mathbf{r} ~ \mathbf{c} + \widetilde{\mathbf{z}}(\mathbf{r}).
\end{align}
Here, $\widetilde{\mathbf{z}}(\mathbf{r})$ is the estimation noise, which is given by
\begin{align}
   \!\! \widetilde{\mathbf{z}}(\mathbf{r}) \!= \!\!\int_{\mathcal{S}_{\mathrm{R}}}\! \!\mathbf{v}^\mathrm{H}(\mathbf{r}) \widetilde{ {z}}(\mathbf{r})d\mathbf{r} \!\sim \!\mathcal{CN}\left(\mathbf{0},\widetilde{\sigma}_r^2 \!\int_{\mathcal{S}_{\mathrm{R}}} \!\!\!\mathbf{v}^\mathrm{H}(\mathbf{r}) \mathbf{v}(\mathbf{r}) d\mathbf{r} \right)\!.
\end{align}
Based on (\ref{c_estimate_MIMO}),  the MSE matrix can be calculated by 
\begin{align} \label{MSE_Matrix_MIMO}
    \mathbf{E}=&\mathbb{E}\left[(\hat{\mathbf{c}}-\mathbf{c})(\hat{\mathbf{c}}-\mathbf{c})^\mathrm{H}\right] \notag \\
    =& \left( \mathbf{I}\!-\!\int_{\mathcal{S}_{\mathrm{R}}} \!\!\mathbf{v}^\mathrm{H}(\mathbf{r})  \mathbf{e}_{\mathrm{rx}} (\mathbf{r})d\mathbf{r}\right)\!\left( \mathbf{I}\!-\!\int_{\mathcal{S}_{\mathrm{R}}} \!\!\mathbf{v}^\mathrm{H}(\mathbf{r})  \mathbf{e}_{\mathrm{rx}} (\mathbf{r})d\mathbf{r}\right)^\mathrm{H}\notag   \\
    & + \widetilde{\sigma}_r^2 \int_{\mathcal{S}_{\mathrm{R}}} \!\!\mathbf{v}^\mathrm{H}(\mathbf{r}) \mathbf{v}(\mathbf{r}) d\mathbf{r} \notag \\
    =& \left( \mathbf{I}\!-\!\int_{\mathcal{S}_{\mathrm{R}}} \!\mathbf{g}^\mathrm{H}(\mathbf{s})  {\mathbf{w}}  (\mathbf{s})d\mathbf{s}\right)\!\left( \mathbf{I}\!-\!\int_{\mathcal{S}_{\mathrm{R}}} \!\mathbf{g}^\mathrm{H}(\mathbf{s})   {\mathbf{w}}  (\mathbf{s})d\mathbf{s}\right)^\mathrm{H} \notag \\
    &+ \frac{\sigma_r^2}{2 P_\mathrm{T}}\mathbf{V} \int_{\mathcal{S}_{\mathrm{T}}} \int_{\mathcal{S}_{\mathrm{T}}}  {\mathbf{w}}(\mathbf{s}) c_{\mathrm{T}} (\mathbf{s}-\mathbf{z})  {\mathbf{w}}^\mathrm{H}(\mathbf{z}) d \mathbf{z} d \mathbf{s},
\end{align}
where we denote $\mathbf{V}=\int_{\mathcal{S}_{\mathrm{R}}} \mathbf{v}^\mathrm{H}(\mathbf{r}) \mathbf{v}(\mathbf{r}) d\mathbf{r}$ and $\mathbf{g}(\mathbf{s})=\int_{\mathcal{S}_{\mathrm{R}}} h^\mathrm{H}(\mathbf{r}-\mathbf{s}) \mathbf{v}(\mathbf{r})d\mathbf{r}$. Accordingly, the optimal $\mathbf{v}_{\mathrm{MMSE}}(\mathbf{r})$ can be obtained by
\begin{align}
    \mathbf{v}_{\mathrm{MMSE}}(\mathbf{r}) \!=\!\arg \min_{\mathbf{v}(\mathbf{r})} \mathrm{tr}(\mathbf{E}) \overset{\mathrm{(a)}}{=} \! \frac{ \mathbf{e}_{\mathrm{rx}}(\mathbf{r})}{\widetilde{\sigma}_r^2}\left(\!\mathbf{I}_N+\!\frac{1}{\widetilde{\sigma}_r^2} \mathbf{Q}\!\right)^{-1}\!\!. 
\end{align}
In step (a), CoV is employed to solve the minimization problem. Substituting $\mathbf{v}_{\mathrm{MMSE}}(\mathbf{r})$ to  (\ref{MSE_Matrix_MIMO}), the achievable minimum MSE can be given by
\begin{align}
   \mathbf{E}_\mathrm{MMSE} = \left(\mathbf{I}_N+\frac{1}{\widetilde{\sigma}_r^2} \mathbf{Q}\right)^{-1}.
\end{align}
The corresponding achievable rate of the system can be expressed as $\widetilde{R}=\log \det \left(\mathbf{E}_\mathrm{MMSE}^{-1}\right)$. Similar to the previous section, the rate maximization problem (\ref{problem_MIMO_2}) can be equivalent to the following weighted MSE minimization problem, i.e.,  
\begin{align} \label{WMMSE_problem_MIMO}
     \min_{{\mathbf{w}}(\mathbf{s}),\mathbf{v}(\mathbf{r}),\mathbf{U}} \quad & \mathrm{tr}\left( \mathbf{U} \mathbf{E}\right) - \log \det (\mathbf{U}).
\end{align}
When ${\mathbf{w}}(\mathbf{s})$ is given, $\mathbf{v}_\mathrm{MMSE}(\mathbf{r})$ and  $\mathbf{E}_\mathrm{MMSE}^{-1}$ are the optimal solutions for $\mathbf{v}(\mathbf{r})$ and $\mathbf{U}$, respectively. 

With $\mathbf{v}(\mathbf{r})$ and $\mathbf{U}$ fixed,  the problem (\ref{WMMSE_problem_MIMO}) is reduced to  
\begin{align}
     \min_{ {\mathbf{w}}(\mathbf{s}) } \quad & \mathrm{tr}\left[ \mathbf{U} \mathbf{E} ({\mathbf{w}}(\mathbf{s})) \right].
\end{align}
The optimal $ {\mathbf{w}}(\mathbf{s})$ can be obtained via CoV, which yields 
\begin{align}
     \Re\left\{\int_{\mathcal{S}_{\mathrm{T}}}  {\mathbf{p}}(\mathbf{s}) \boldsymbol{\eta}^\mathrm{H}(\mathbf{s})d \mathbf{s}\right\} =0.
\end{align}
Here, we denote 
\begin{align}
    {\mathbf{p}}(\mathbf{s})  =& \mathbf{g}(\mathbf{s}) \mathbf{U} \left(\int_{\mathcal{S}_{\mathrm{T}}}  \mathbf{g}^\mathrm{H}(\mathbf{z})  {\mathbf{w}}  (\mathbf{z})d\mathbf{z}  -\mathbf{I}_N\right) \notag \\
    &+ \frac{1}{\beta} \int_{\mathcal{S}_{\mathrm{T}}} c_{\mathrm{T}} (\mathbf{s}-\mathbf{z})  {\mathbf{w}}(\mathbf{z})d\mathbf{z}, \\
    \frac{1}{\beta} =& \frac{\sigma_r^2 \mathrm{tr}(\mathbf{U}\mathbf{V})}{2P_\mathrm{T}}.
\end{align}
According to the fundamental lemma of CoV, $ {\mathbf{p}}(\mathbf{s})=0$ must hold, leading to the optimal beamformer condition, i.e.,
\begin{align}\label{optimal_structure_MIMO}
      \int_{\mathcal{S}_{\mathrm{T}}} \! \! c_{\mathrm{T}} (\mathbf{s}\!-\!\mathbf{z})  {\mathbf{w}}(\mathbf{z})d\mathbf{z}\!=\!\beta\mathbf{g}(\mathbf{s}) \mathbf{U} \left(\!\mathbf{I}\!-\!\!\int_{\mathcal{S}_{\mathrm{T}}} \!\! \mathbf{g}^\mathrm{H}(\mathbf{z})   {\mathbf{w}}  (\mathbf{z})d\mathbf{z} \! \right)\!.
\end{align}
It can be observed that (\ref{optimal_structure_MIMO}) is mathematically identical to that of the multi-user system in (\ref{optimal_structure}). Consequently, the subsequent derivation of optimal $ {\mathbf{w}}(\mathbf{s})$ follows the same procedures as detailed in Section \ref{sec:algorithm}. For the sake of conciseness, these redundant details are omitted here. 

\section{Simulation Results} \label{simulation}
This section evaluates the performance of the proposed KA-based WMMSE algorithm through $100$ Monte Carlo trials. Unless otherwise specified, the default parameters are configured as follows. The system operates at a carrier frequency of $2.4$ GHz. The CAPA transmitter is modeled as a square surface, where its lengths along $x$- and $y$-axes are $L_x\!=\!L_y\!=\!0.5$ m. Thus, the total area of CAPA is  $A_\mathrm{T}\!=\!0.25$ m$^2$. Besides, we model the CAPA surface as copper, where the conductivity and the permeability are typically set to $\sigma_s\!=\!5.8\times 10^7$ S/m and $\mu_s\!=\!4\pi\times 10^{-7}$ H/m. The free-space impedance is set to $\eta=120\pi$ $\Omega$ and the transmit power is fixed at $P_t=1$ W.  We consider that $K\!=\!4$ users are randomly distributed within a circular zone centered at ($30$, $-30$, $50$)  m with a radius of $15$ m. Finally, the Gauss–Legendre quadrature with $M\! = \!30$ sampling point is employed to calculate the integrals. 

For comparison, the following benchmarks are considered. Under the same CAPA-aided system setup, the proposed algorithm is compared to the Fourier-based method \cite{Zhaolin_CAPA_BF}. This method approximates the continuous functions using a finite number of Fourier series terms and transforms the problem into the classical  finite-dimensional optimization problem.  In addition, the CAPA-aided system is compared to the conventional SPDA system, where the transmit beamforming is optimized using the algorithm proposed in \cite{WMMSE}. Moreover, we compare the results obtained with mutual coupling effects to those without considering mutual coupling effects, thereby highlighting the necessity and rationality of the proposed mutual coupling-aware design. 

\begin{figure}[!t]
    \centering
    \includegraphics[scale=0.58]{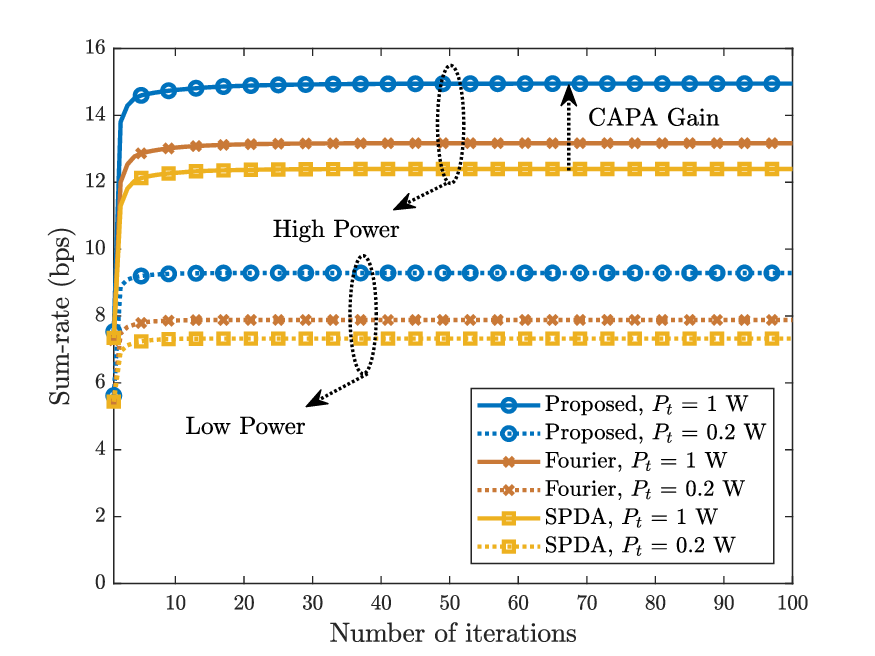}
    \caption{ Convergence behavior of the proposed KA-based WMMSE algorithm under different transmit power.}
    \label{fig:convergence}
\end{figure}


\subsection{Multi-user Scenario}
In Fig. \ref{fig:convergence}, the convergence behaviors of the proposed KA-based WMMSE algorithm are illustrated. It can be observed that both the proposed approach and the benchmarks converge under different transmit power, stabilizing within a few iterations. Moreover, as expected, schemes operating with higher transmit power consistently achieves higher sum-rate. As compared to the conventional SPDA system,  the CAPA-aided system exhibits superior performance, since CAPA is able to exploit spatial DoFs enabled by the continuous aperture. Furthermore, the proposed KA-based WMMSE method outperforms the state-of-the-art Fourier-based method. These results validate the effectiveness and superiority of the proposed approach for continuous source current design. 


\begin{figure}[!t]
    \centering
    \includegraphics[scale=0.58]{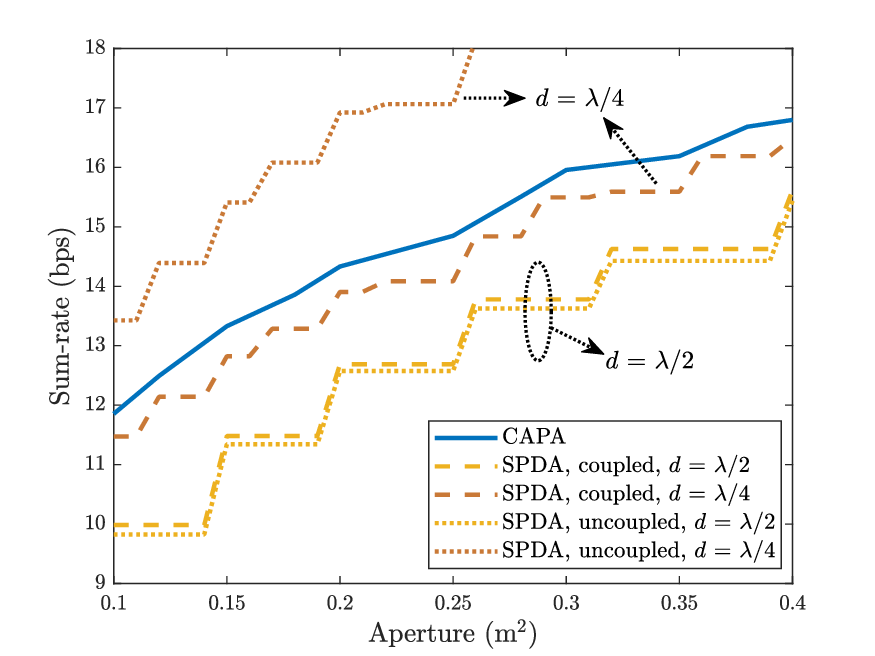}
    \caption{Sum-rate (bps) versus the aperture size (m$^2$).}
    \label{fig:aperture}
\end{figure}

Fig. \ref{fig:aperture} demonstrates the impact of aperture size $A_\mathrm{T}$ on the sum-rate. It can be observed that the sum-rate of all the cases consistently increases as the aperture size grows, since a larger aperture provides more DoFs. Aligned with previous findings, the CAPA-aided system outperforms the  conventional SPDA scenario with mutual coupling effects, and the SPDA case with smaller antenna spacing gradually approaches the CAPA performance in terms of sum-rate. This implies that CAPA is viewed as the extreme case of the dense deployment of array. We also compare scenarios with and without considering the mutual coupling effects. For half-wavelength antenna spacing, the case with and without mutual coupling effects exhibit nearly identical sum-rate performance, indicating that the mutual coupling effect is negligible in this regime. In contrast, for the quarter-wavelength antenna spacing, the case that neglects mutual coupling significantly outperforms both the CAPA-aided system and the SPDA system with mutual coupling, demonstrating that the sum-rate is severely overestimated if the  mutual coupling effects are ignored. 

\begin{figure}[!t]
    \centering
    \includegraphics[scale=0.58]{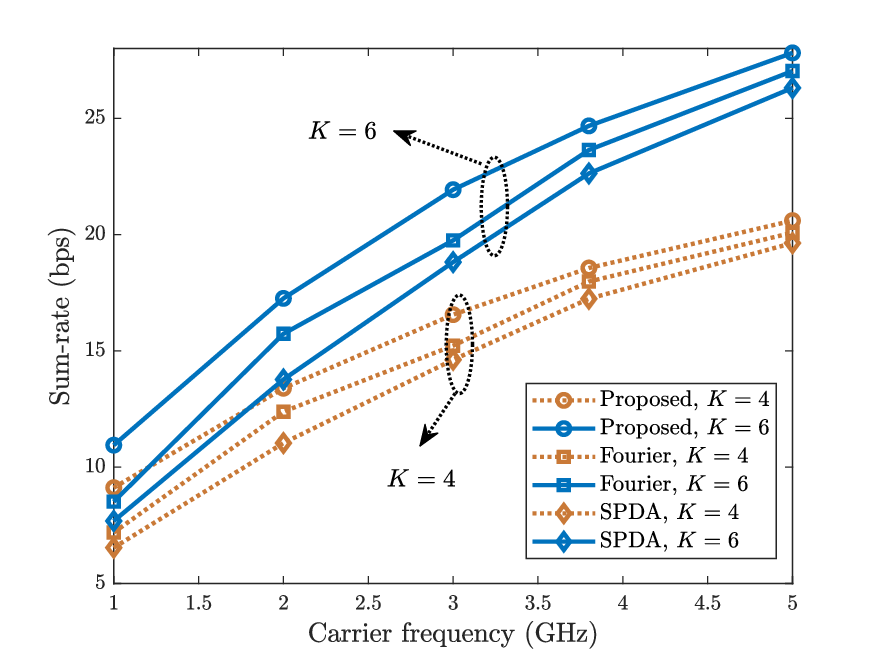}
    \caption{Sum-rate (bps) versus carrier frequency (GHz).}
    \label{fig:carrier}
\end{figure}

Fig. \ref{fig:carrier} illustrates the trend of sum-rate under different user numbers as the carrier frequency $f_c$ increases. As expected, the sum-rate improves with increasing $f_c$, owing to the higher spatial DoFs achieved by the higher frequencies. Moreover, the proposed KA-based WMMSE method for the CAPA-aided system consistently achieves superior performance compared with the benchmark schemes, which is in alignment with the previous results. On the other hand, it can be observed that increasing $K$ results in higher sum-rate performance across all the cases, and the corresponding performance gain becomes more pronounced as the carrier frequency increases. As compared to the case of $K=4$, the sum-rate improves by 2 bps at $f_c=1$ GHz, while the improvement increases to about 7~bps at $f_c = 5$~GHz.  

\begin{figure}[!t]
    \centering
    \includegraphics[scale=0.58]{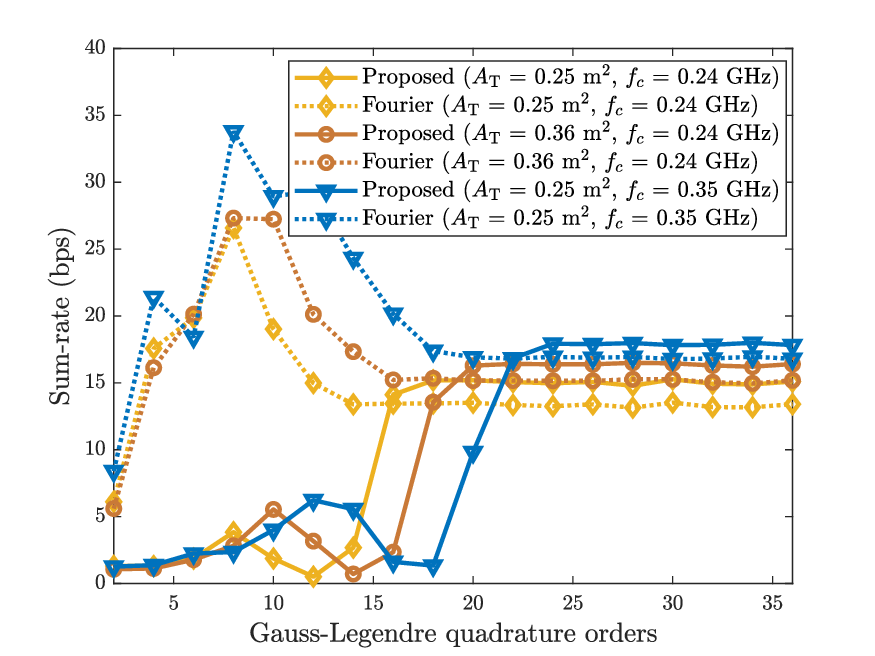}
    \caption{Convergence of the sum-rate with different Gauss-Legendre quadrature orders under different carrier frequencies and aperture sizes.}
    \label{fig:GL}
\end{figure}

In Fig. \ref{fig:GL}, the convergence behaviors with increasing Gauss-Legendre quadrature orders  are illustrated under different aperture sizes and carrier frequencies. It can be observed that all cases gradually converge to stable values as $M$ increases, indicating that an appropriately chosen $M$ can ensure the accuracy of the integrals. Meanwhile,  the proposed algorithm achieves higher sum-rate than the Fourier-based methods once convergence is reached, further validating the correctness of the results. Moreover, it is interesting to see that a larger quadrature order is required for convergence when the aperture size increases or the carrier frequency becomes higher. This behavior arises because larger apertures and higher carrier frequencies result in larger integration domains and more rapid phase oscillations, thereby necessitating a higher quadrature order to accurately approximate the integrals. 

\subsection{CAPA-to-CAPA MIMO Scenario}
In the CAPA-to-CAPA MIMO scenario, the default parameters are specified as follows.  Both the CAPA transmitter and the CAPA receiver are square surfaces with areas of $A_\mathrm{T}=A_\mathrm{R}\!=\!0.25$ m$^2$. The center of the CAPA transmitter and the CAPA receiver are located  at ($0$, $0$, $0$) m and ($10$, $-10$, $0$) m, respectively. The CAPA receiver is parallel to the CAPA transmitter and has no rotation relative to the CAPA transmitter. The transmitter has $N=4$ independent data streams. Unless otherwise specified, all remaining parameters are set the same as those in the multi-user scenario. 

\begin{figure}[!t]
    \centering
    \includegraphics[scale=0.58]{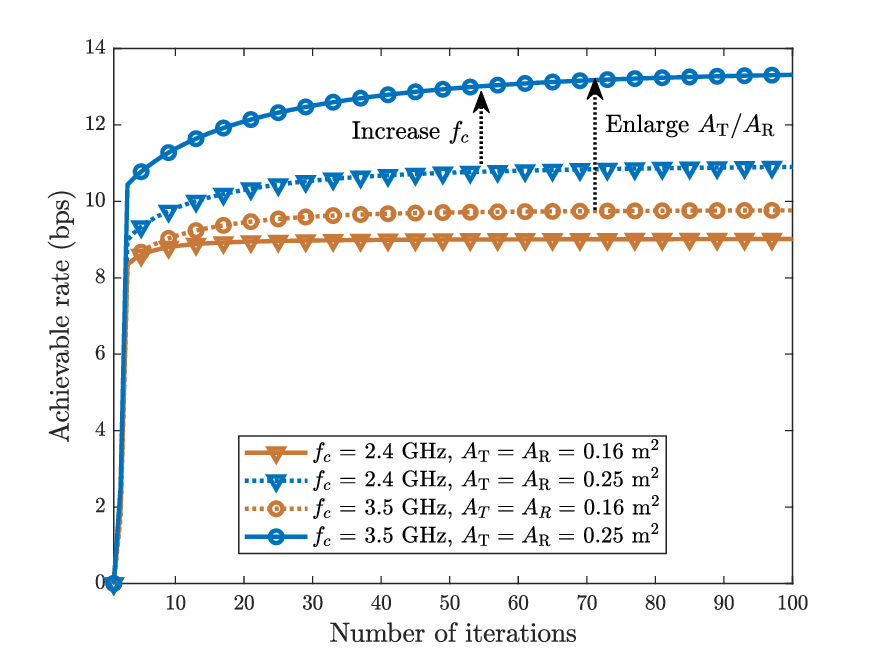}
    \caption{Convergence behavior of the CAPA-to-CAPA MIMO system under different carrier frequencies and aperture sizes.}
    \label{fig:MIMO_Conv}
\end{figure}

In Fig. \ref{fig:MIMO_Conv}, we exhibit the convergence behaviors of the considered CAPA-to-CAPA MIMO system under different aperture sizes and carrier frequencies. It can be observed that the achievable rate converges across all the cases, showing the effectiveness of the extension framework. Besides, we can find that increasing the carrier frequency and the aperture size leads to higher achievable rates after convergence, whereas scenarios with higher carrier frequencies and larger apertures require more iterations to converge. This behavior arises from two main factors: (1) higher spatial DoFs are enabled by larger apertures and higher carrier frequencies; (2) larger aperture size corresponds to larger integral field, while higher carrier frequency leads to faster phase oscillations, both of which slow down the convergence process. 

\begin{figure}[!t]
    \centering
    \includegraphics[scale=0.58]{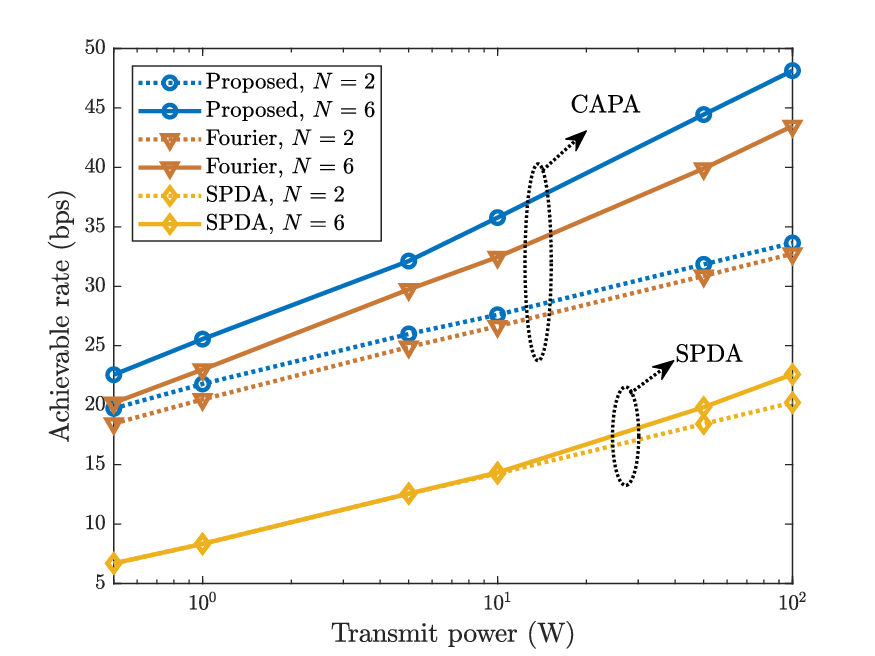}
    \caption{Achievable rate (bps) versus transmit power (W).}
    \label{fig:MIMO_Pt}
\end{figure}

Fig. \ref{fig:MIMO_Pt} shows the effect of the transmit power on the achievable rate of the CAPA-to-CAPA MIMO system, where the CAPA receiver is located at ($3$, $-3$, $0$) m. As shown in Fig. \ref{fig:MIMO_Pt}, the achievable rate increases monotonically with the transmit power, and the proposed method consistently outperforms the Fourier-based method as well as the SPDA schemes across different transmit power. Moreover, it can observed that the case with a larger number of data streams $N=6$ can achieve higher rate compared to the case with $N=2$, and the corresponding performance gap becomes more pronounced when given more transmit power. This behavior can be attributed to the increase in signal-to-noise ratio (SNR) with higher transmit power.  Under higher SNR conditions, more data streams allows the system to better exploit the available DoFs, thereby enlarging the performance gain. 

\begin{figure}[!t]
    \centering
    \includegraphics[scale=0.58]{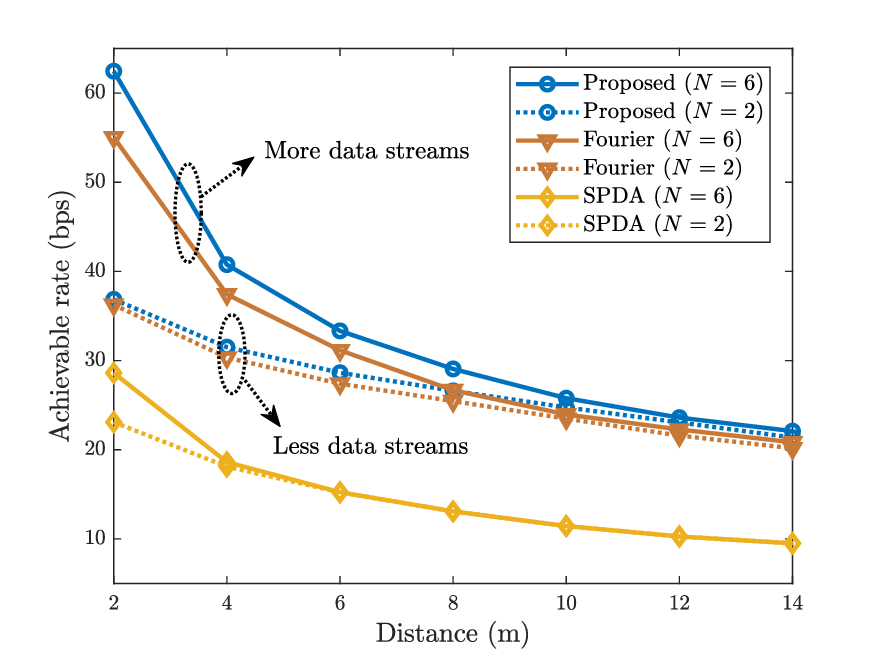}
    \caption{Achievable rate (bps) versus distance (m) between CAPA transmitter and CAPA receiver.}
    \label{fig:MIMO_Dist}
\end{figure}

Fig. \ref{fig:MIMO_Dist} illustrates the trend of achievable rate as the distance between the transmitter and receiver increases. It can be observed that the achievable rate decreases with increasing distance. This behavior is mainly attributed to two factors. On the one hand,  a larger separation distance leads to more severe free-space path loss, resulting in reduced channel gain.  On the other hand, as the distance gradually increases, the near-field effects become weaker and the channels tends to approach a line-of-sight regime, which reduces the available spatial DoFs \cite{Yuanwei_Nearfield}.  Additionally, it is observed that the achievable rates of CAPA with $N=6$ substantially outperformed those with $N=2$ when the transceivers are placed close to each other, which is also attributed to the enhanced spatial DoFs at short distances. 

\section{Conclusions} \label{conclusions}
This paper studied a CAPA-aided multi-user communication system considering the mutual coupling effects, in which a mutual coupling-aware sum-rate maximization functional optimization was formulated. To tackle the resultant problem, a KA-based WMMSE algorithm was developed, where the closed-form solution of optimal beamformer was derived via CoV and KA. Additional, an extension to CAPA-to-CAPA scenario was presented. Numerical results demonstrated that beamforming design considering mutual coupling effects is more physically rational in practical scenarios, and the proposed algorithm is effective and superior as compared to the benchmarks.  

\begin{appendices}   
\section{Proof of Proposition \ref{optimal_structure_prop}}  \label{app_optimal_structure_prop}
Since $f_c$ in (\ref{MMSE_problem_2_3}{b}) is a convex quadratic functional w.r.t. ${\mathbf{w}}(\mathbf{s})$, CoV is can be employed to find the optimal solution. To this end, we first define the variation of $f_c$ as $\Phi(\epsilon)=f_c\left(\delta{\mathbf{w}}(\mathbf{s}) \right)$, where  $\delta{\mathbf{w}}(\mathbf{s}) ={\mathbf{w}}(\mathbf{s}) +\epsilon\boldsymbol{\eta}(\mathbf{s})$ and $\epsilon\boldsymbol{\eta}(\mathbf{s})$ is any perturbation function. Explicitly, $\Phi(\epsilon)$ can be written as
\begin{align}
\Phi(\epsilon)=
    &\int_{\mathcal{S}_{\mathrm{T}}} \int_{\mathcal{S}_{\mathrm{T}}}    \mathbf{g}(\mathbf{z}) \mathbf{U} \mathbf{g}^\mathrm{H}(\mathbf{s})\delta{\mathbf{w}}(\mathbf{s})  \delta{\mathbf{w}}^\mathrm{H}(\mathbf{z})   d\mathbf{s} d\mathbf{z} \notag \\
    &-  2\Re\left\{\int_{\mathcal{S}_{\mathrm{T}}}  \delta{\mathbf{w}}(\mathbf{s}) \mathbf{U} \mathbf{g}^\mathrm{H}(\mathbf{s})  d\mathbf{s}\right\} \notag\\
    &+ \frac{1}{\beta}\int_{\mathcal{S}_{\mathrm{T}}} \int_{\mathcal{S}_{\mathrm{T}}}   \delta{\mathbf{w}}(\mathbf{s}) c_{\mathrm{T}} (\mathbf{s}-\mathbf{z})  \delta{\mathbf{w}}^\mathrm{H}(\mathbf{s})   d \mathbf{z} d \mathbf{s} \notag \\
    = & ~2\epsilon \Re\left\{\int_{\mathcal{S}_{\mathrm{T}}}   \boldsymbol{\eta}(\mathbf{s}){\mathbf{p}}^\mathrm{H}(\mathbf{s}) d \mathbf{s}\right\} + \Psi(\epsilon^2) + \widetilde{C},
\end{align}
where $\Psi(\epsilon^2)$ collects all the terms with $\epsilon^2$, while $\widetilde{C}$ denotes the constant terms that are independent of $\epsilon$. In addition, we denote 
\begin{align}
      {\mathbf{p}} (\mathbf{s})=&\int_{\mathcal{S}_{\mathrm{T}}}  \mathbf{g}(\mathbf{s})   \mathbf{U} \mathbf{g}^\mathrm{H}(\mathbf{z})  {\mathbf{w}} (\mathbf{z}) d\mathbf{z} - \mathbf{g} (\mathbf{s})  \mathbf{U} \notag  \\
      &+ \frac{1}{\beta}  \int_{\mathcal{S}_{\mathrm{T}}}   c_{\mathrm{T}} (\mathbf{s}-\mathbf{z}){\mathbf{w}} (\mathbf{z}) d\mathbf{z}.
\end{align}
Accordingly, the optimal ${\mathbf{w}}(\mathbf{s})$ can be achieved by enforcing the following optimality condition 
\begin{align} \label{CoV_condition}
    \left.\frac{d}{d \epsilon} \Phi(\epsilon)\right|_{\epsilon=0} =2  \Re\left\{\int_{\mathcal{S}_{\mathrm{T}}}   \boldsymbol{\eta}(\mathbf{s}){\mathbf{p}}^\mathrm{H}(\mathbf{s}) d \mathbf{s}\right\}=0.
\end{align}
Based on the fundamental lemma of CoV \cite{Zhaolin_CAPA_BF}, for any arbitrary given $\boldsymbol{\eta}(\mathbf{s})$, the condition ${\mathbf{p}}(\mathbf{s})=0$ must hold to satisfy the condition (\ref{CoV_condition}), which yields 
    \begin{align}
      \!\int_{\mathcal{S}_{\mathrm{T}}} \! \!\! c_{\mathrm{T}} (\mathbf{s}\!-\!\mathbf{z})  {\mathbf{w}}(\mathbf{z})d\mathbf{z}\!=\!\beta\mathbf{g}(\mathbf{s}) \mathbf{U} \left(\mathbf{I}\!-\!\int_{\mathcal{S}_{\mathrm{T}}} \! \! \!\mathbf{g}^\mathrm{H}(\mathbf{z})   {\mathbf{w}}  (\mathbf{z})d\mathbf{z} \! \right).
\end{align}

This completes the derivation.

\section{Proof of Proposition \ref{closed_form_prop}}  \label{app_closed_form_prop}
Given the $c_{\mathrm{T}}^{-1} (\mathbf{s}'-\mathbf{s})$, the closed-form solution of optimal ${\mathbf{w}}(\mathbf{s}')$ can be derived as follows. In (\ref{optimal_structure}), we first multiply both sides by $c_{\mathrm{T}}^{-1} (\mathbf{s}'-\mathbf{s})$ and then integrate over $\mathbf{s}$. This yields  
\begin{align}\label{optimal_structure_2}
     {\mathbf{w}}(\mathbf{s}') =  \beta\widetilde{\mathbf{g}}(\mathbf{s}') \mathbf{U} \left(\mathbf{I}-\int_{\mathcal{S}_{\mathrm{T}}}  \mathbf{g}^\mathrm{H}(\mathbf{z})   {\mathbf{w}}  (\mathbf{z})d\mathbf{z}  \right),
\end{align}
where we define 
\begin{align}
    \widetilde{\mathbf{g}}(\mathbf{s}')=\int_{\mathcal{S}_{\mathrm{T}}} c_{\mathrm{T}}^{-1} (\mathbf{s}'-\mathbf{s}) \mathbf{g}(\mathbf{s})d\mathbf{s}.
\end{align}
However, it can be observed from (\ref{optimal_structure_2}) that  ${\mathbf{w}}(\mathbf{z})$ is still appears inside $\int_{\mathcal{S}_{\mathrm{T}}}  \mathbf{g}^\mathrm{H}(\mathbf{z})   {\mathbf{w}}  (\mathbf{z})d\mathbf{z}$, which prevents the isolation of ${\mathbf{w}}(\mathbf{s})$ from other components. To address it,  we further multiply both sides by $\mathbf{g}^\mathrm{H}(\mathbf{s}')$ and integrate
over $\mathbf{s}'$, yielding 
\begin{align} \label{eq_long}
    & \int_{\mathcal{S}_{\mathrm{T}}}\mathbf{g}^\mathrm{H}(\mathbf{s}') {\mathbf{w}}(\mathbf{s}')d\mathbf{s}' \notag \\
    &=  \!\beta\int_{\mathcal{S}_{\mathrm{T}}}\!\mathbf{g}^\mathrm{H}(\mathbf{s}')\widetilde{\mathbf{g}}(\mathbf{s}') d\mathbf{s}'  \mathbf{U} \left(\mathbf{I}\!-\!\int_{\mathcal{S}_{\mathrm{T}}} \! \mathbf{g}^\mathrm{H}(\mathbf{z})   {\mathbf{w}}  (\mathbf{z})d\mathbf{z}  \right).
\end{align}
By defining
\begin{subequations} \label{Lambda_G}
 \begin{align}
    \boldsymbol{\Lambda} &= \int_{\mathcal{S}_{\mathrm{T}}}\mathbf{g}^\mathrm{H}(\mathbf{s}') {\mathbf{w}}(\mathbf{s}')d\mathbf{s}',\\
    \widetilde{\mathbf{G}}&=\int_{\mathcal{S}_{\mathrm{T}}}\mathbf{g}^\mathrm{H}(\mathbf{s}')\widetilde{\mathbf{g}}(\mathbf{s}') d\mathbf{s}' \notag\\
    &=\int_{\mathcal{S}_{\mathrm{T}}}\int_{\mathcal{S}_{\mathrm{T}}}\mathbf{g}^\mathrm{H}(\mathbf{s}')c_{\mathrm{T}}^{-1} (\mathbf{s}'-\mathbf{s}) \mathbf{g}(\mathbf{s}) d\mathbf{s}'d\mathbf{s},
\end{align}   
\end{subequations}
the equation (\ref{eq_long}) can be re-written as 
\begin{align} \label{matrix_isolated}
    \boldsymbol{\Lambda}&=\beta\widetilde{\mathbf{G}}  \mathbf{U} \left(\mathbf{I}-\boldsymbol{\Lambda} \right) \notag\\
    \Longleftrightarrow \boldsymbol{\Lambda}&=\left(\frac{1}{\beta}\mathbf{I}+\widetilde{\mathbf{G}} \mathbf{U}\right)^{-1}\widetilde{\mathbf{G}}  \mathbf{U}.
\end{align}
By substituting (\ref{Lambda_G}{a}) and (\ref{matrix_isolated}) to (\ref{optimal_structure_2}), the closed-form solution of ${\mathbf{w}}(\mathbf{s}')$ can be obtained as follows:
\begin{align}
     {\mathbf{w}}(\mathbf{s}') &=  \beta\widetilde{\mathbf{g}}(\mathbf{s}') \mathbf{U} \left(\mathbf{I}-\left(\frac{1}{\beta}\mathbf{I}+\widetilde{\mathbf{G}} \mathbf{U}\right)^{-1}\widetilde{\mathbf{G}}  \mathbf{U}  \right) \notag \\
    & \overset{\mathrm{(a)}}{=} \widetilde{\mathbf{g}}(\mathbf{s}') \mathbf{U} \left(\frac{1}{\beta}\mathbf{I}+\widetilde{\mathbf{G}} \mathbf{U}\right)^{-1}.
\end{align}
In (a), the Woodbury Identity has been employed. The proof ends.

\section{Proof of Proposition \ref{FT_prop}}  \label{app_FT_prop}
According to the variation of the Euler's formula $\sin(\theta) =  \left(e^{\jmath\theta}-e^{-\jmath\theta}\right)/2\jmath$, $\phi(\mathbf{s})$ can be rewritten as
\begin{align} \label{phi_Euler}
    \phi(\mathbf{s}) =\frac{\sin(\kappa_0 \|\mathbf{s}\|)}{4\pi \|\mathbf{s}\|}=\frac{1}{8\pi \jmath} \cdot \left(\frac{e^{\jmath\kappa_0 \|\mathbf{s}\|}}{\|\mathbf{s}\|}-\frac{e^{-\jmath\kappa_0 \|\mathbf{s}\|}}{\|\mathbf{s}\|}\right).
\end{align}
To find $\mathcal{F}\{\phi(\mathbf{s})\}$, the Fourier transform of ${e^{\jmath\kappa_0 \|\mathbf{s}\|}}/{\|\mathbf{s}\|}$ is required. According to the Weyl identity, we have
\begin{align}
\frac{e^{\jmath \kappa_0 \|\mathbf{s}\|}}{\|\mathbf{s}\|} = \frac{\jmath}{2\pi} \iint_{-\infty}^{\infty}  e^{\jmath \boldsymbol{\kappa}^\mathrm{T} \mathbf{s}} \frac{e^{ \jmath  \kappa_z |s_z|}}{\kappa_z} d\kappa_x d\kappa_y,
\end{align}
where $\mathbf{s}=[s_x, s_y, s_z]^\mathrm{T}$, $\kappa_0=\|[\kappa_x,\kappa_y,\kappa_z]^\mathrm{T}\|=2\pi/\lambda$ and $\kappa_z=\sqrt{\kappa_0^2-\|\boldsymbol{\kappa}\|^2}\geq0$. Considering the inverse transform in (\ref{FT}{b}), the Fourier transform of ${e^{\jmath\kappa_0 \|\mathbf{s}\|}}/{\|\mathbf{s}\|}$ can be given by 
\begin{align} \label{FT_instance_2}
    \mathcal{F}\left\{\frac{e^{\jmath \kappa_0 \|\mathbf{s}\|}}{\|\mathbf{s}\|}\right\} &= {\jmath}{2\pi}\frac{e^{ \jmath  \kappa_z |s_z|}}{\sqrt{\kappa_0^2-\|\boldsymbol{\kappa}\|^2}} \overset{(\text{a})}{=}\frac{\jmath 2\pi}{\sqrt{\kappa_0^2-\|\boldsymbol{\kappa}\|^2}}.
\end{align}
Since the CAPA is placed in $x-y$ plane, $s_z=0$, where (a) holds. According to the conjugation property of Fourier transform, the Fourier transform of ${e^{-\jmath\kappa_0 \|\mathbf{s}\|}}/{\|\mathbf{s}\|}$ is
\begin{align}\label{FT_instance_1}
    \mathcal{F}\left\{\frac{e^{-\jmath \kappa_0 \|\mathbf{s}\|}}{\|\mathbf{s}\|}\right\} &= \frac{-\jmath 2\pi}{\sqrt{\kappa_0^2-\|\boldsymbol{\kappa}\|^2}}.
\end{align}
Substitute (\ref{FT_instance_1}) and (\ref{FT_instance_2}) to (\ref{phi_Euler}), we can obtain 
\begin{align}
    \mathcal{F}\{\phi(\mathbf{s})\} = \begin{cases}
        \frac{1}{2\sqrt{\kappa_0^2-\|\boldsymbol{\kappa}\|^2}},& \|\boldsymbol{\kappa}\| \leq \kappa_0, \\
        0,& \|\boldsymbol{\kappa}\| \geq \kappa_0.
    \end{cases}
    \end{align}

The derivation has been completed.
\end{appendices}

\bibliographystyle{IEEEtran}
\bibliography{references2}{}

\end{document}